\let\frak\mathfrak
\let\Bbb\mathbb
\def\>{\relax\ifmmode\mskip.666667\thinmuskip\relax\else\kern.111111em\fi}
\def\<{\relax\ifmmode\mskip-.333333\thinmuskip\relax\else\kern-.0555556em\fi}
\def\vsk#1>{\vskip#1\baselineskip}
\def\vv#1>{\vadjust{\vsk#1>}\ignorespaces}
\def\vvn#1>{\vadjust{\nobreak\vsk#1>\nobreak}\ignorespaces}
 \let\alb\allowbreak
\def\plait#1{\par\hangindent2\parindent\indent\kern\parindent
\llap{#1\enspace}\ignorespaces}
\let\Smallskip\smallskip
\def\smallskip{\par\Smallskip}
\let\Medskip\medskip
\def\medskip{\par\Medskip}
\let\Bigskip\bigskip
\def\bigskip{\par\Bigskip}
\let\Maketitle\maketitle
\def\maketitle{\Maketitle\thispagestyle{empty}\let\maketitle\empty}
\newtheorem{thm}{Theorem}[section]
\newtheorem*{thm*}{Theorem}
\newtheorem{cor}[thm]{Corollary}
\newtheorem{lem}[thm]{Lemma}
\newtheorem{prop}[thm]{Proposition}
\newtheorem{defn}[thm]{Definition}
\numberwithin{equation}{section}
\theoremstyle{definition}
\newtheorem*{rem}{Remark}
\newtheorem*{example}{Example}
\def\beq{\begin{equation}}
\def\eeq{\end{equation}}
\def\be{\begin{equation*}}
\def\ee{\end{equation*}}
\def\bean{\begin{eqnarray}}
\def\eean{\end{eqnarray}}
\def\bea{\begin{eqnarray*}}
\def\eea{\end{eqnarray*}}
\let\al\alpha
\let\dl\delta  
 \let\eps\varepsilon \let\epsilon\eps
\let\la\lambda 
\let\si\sigma
\newcommand{\glt}{{\frak{gl}_2}}
\newcommand{\Il}{{\mc L_{\la}}}
\let\der\partial
\let\ge\geqslant
\let\geq\geqslant
\let\le\leqslant
\let\leq\leqslant
\let\ox\otimes
\def\R{\Bbb R}
\def\C{\Bbb C}
\def\N{\Bbb N}
\def\gl{\frak{gl}}
\def\lsym#1{#1\alb\dots\relax#1\alb} \def\lc{\lsym,}
\let\on\operatorname
\def\End{\on{End}}
\def\gln{\gl_N}
\def\Yn{Y(\gln)}
\let\mc\mathcal
\def\ii{i,\<\>i}
\def\ij{i,\<\>j}
\def\ji{j,\<\>i}
\def\js{j,\<\>s}
\def\ik{i,\<\>k}
\def\il{i,\<\>l}
\def\jk{j,\<\>k}
\def\kj{k,\<\>j}
\def\kl{k,\<\>l}
\def\sj{s,\<\>j}
\def\sk{s,\<\>k}
\def\KZ/{KZ}
\def\qKZ/{{\itshape q}\/KZ}
\def\qi-{{\itshape q\/}-}
\DeclareMathOperator{\sgn}{sgn}
\def\lrar{\leftrightarrow}
\title[Extended Joseph polynomials, \qi-conformal blocks, and a \qi-Selberg
integral]{Extended Joseph polynomials, quantized conformal blocks,
and a \qi-Selberg type integral}
\author[R.\,Rim\'anyi]{R.\ Rim\'anyi$\>^{1}$}
\author[V.\,Tarasov]{V.\ Tarasov$\>^2$}
\author[A.\,Varchenko]{A.\ Varchenko$\>^{3}$}
\author[P.\,Zinn-Justin]{P.\ Zinn-Justin$\>^{4}$}
\address{R.~Rim\'anyi, Department of Mathematics, University of North Carolina
at Chapel Hill, Chapel Hill, NC 27599-3250, USA.}
\address{V.~Tarasov,
Department of Mathematical Sciences,
Indiana University\,--\>Purdue University Indianapolis,
402 North Blackford St, Indianapolis, IN 46202-3216, USA
and
St.~Petersburg Branch of Steklov Mathematical Institute
Fontanka 27, St.\,Petersburg, 191023, Russia.}
\address{A.~Varchenko, Department of Mathematics, University of North Carolina
at Chapel Hill, Chapel Hill, NC 27599-3250, USA.}
\address{P.~Zinn-Justin,
UPMC Univ Paris 6, CNRS UMR 7589, LPTHE,
75252 Paris Cedex, France.}
\begin{document}
\maketitle

{\hfill\it To the memory of Yu.~Stroganov}

{\let\thefootnote\relax
\footnotetext{\vsk-.8>\noindent
$^1\<${\sl E\>-mail}:\enspace rimanyi@email.unc.edu\>,
supported in part by NSA grant CON:H98230-10-1-0171.\\
$^2\<${\sl E\>-mail}:\enspace vt@math.iupui.edu\>, vt@pdmi.ras.ru\>,
supported in part by NSF grant DMS-0901616.\\
$^3\<${\sl E\>-mail}:\enspace anv@email.unc.edu\>,
supported in part by NSF grant DMS-1101508.\\
$^4\<${\sl E\>-mail}:\enspace pzinn@lpthe.jussieu.fr,
supported in part by ERC grant 278124 ``LIC''.}}

\bigskip

\medskip
\begin{abstract}

We consider the tensor product
$V=(\C^N)^{\otimes n}$ of the vector representation of $\gln$
and its weight decomposition $V=\oplus_{\la=(\la_1,\dots,\la_N)}V[\la]$.
For $\la = (\la_1\geq \dots \geq \la_N)$,
the trivial bundle $V[\la]\times \C^n\to\C^n$
has a subbundle
of \qi-conformal blocks at
level $\ell$, where $\ell = \la_1-\la_N$
if $\la_1-\la_N> 0$ and $\ell=1$ if $\la_1-\la_N=0$.
We construct a polynomial section $I_\la(z_1,\dots,z_n,h)$ of the
subbundle. The section  is the main object of the paper. We identify
the section with
 the generating function $J_\la(z_1,\dots,z_n,h)$ of the extended Joseph
polynomials of orbital varieties, defined in \cite{artic34, {artic39}}.

For $\ell=1$, we show that the subbundle of \qi-conformal blocks has rank 1
and $I_\la(z_1,\dots,z_n,h)$ is  flat with respect to the quantum Knizhnik--Zamolodchikov discrete connection.

For $N=2$ and $\ell=1$, we represent our polynomial as a multidimensional
\qi-hypergeometric integral and obtain a \qi-Selberg type
identity, which says that the integral is an explicit polynomial.
\end{abstract}

\setcounter{footnote}{0}
\renewcommand{\thefootnote}{\arabic{footnote}}

\section{Introduction}
The bundle of conformal blocks was introduced in conformal field theory.
The bundle has a projectively flat Knizhnik--Zamolodchikov (KZ) connection which is a flat connection for
conformal blocks on the sphere, see, for example, \cite{KZ, KL-tensor}.
The equations for flat sections of the bundle of conformal blocks on the sphere
are called the KZ differential equations. In \cite{SV, FScV, FScV2} solutions of the
KZ differential equations were constructed as multidimensional hypergeometric integrals.

The \qKZ/ difference equations were introduced in \cite{FR-qKZ}. The
$\gln$ \qi-conformal blocks were defined in \cite{MV1, MV2}, cf~\cite{EF}.
The  bundle of \qi-conformal blocks has a discrete flat connection
defined by \qKZ/ operators, see  \cite{MV1, MV2}. In \cite{TV1} solutions of the $\glt$ \qKZ/
equations were constructed as multidimensional \qi-hypergeometric integrals.

We consider the tensor product 
$V=(\C^N)^{\otimes n}$ of the vector representation of $\gln$
and its weight decomposition $V=\oplus_{\la=(\la_1,\dots,\la_N)}V[\la]$.
For $\la = (\la_1\geq \dots \geq \la_N)$,
the trivial bundle $V[\la]\times \C^n\to\C^n$
has a subbundle
of \qi-conformal blocks at
level $\ell$, where $\ell = \la_1-\la_N$
if $\la_1-\la_N> 0$ and $\ell=1$ if $\la_1-\la_N=0$.
We construct a polynomial section $I_\la(z_1,\dots,z_n,h)$ of the
subbundle. The section  is the main object of the paper. We identify
the section with
 the generating function $J_\la(z_1,\dots,z_n,h)$ of the extended Joseph
polynomials of the orbital varieties, defined in \cite{artic34, {artic39}}.

For $\ell=1$, we show that the subbundle of \qi-conformal blocks has rank 1
and $I_\la(z_1,\dots,z_n,h)$ is  flat with respect to the \qKZ/ discrete connection.

For $N=2$ and $\ell=1$, we represent our polynomial as a multidimensional
\qi-hypergeometric integral and obtain a \qi-Selberg type
identity, which says that the integral is an explicit polynomial.
The simplest of these identities is
\beq
\label{barnes}
\int_{-i\infty}^{i\infty}
\Gamma(a + s)\Gamma(b + s)\Gamma(c - s)\Gamma(d - s) \,ds \,=
\,2\pi i\ \frac{\Gamma(a + c)\Gamma(a + d)\Gamma(b + c)\Gamma(b + d)}
{\Gamma(a + b + c + d)}\;,\kern-.5em
\eeq
which is a formula for the Barnes integral in \cite{WW}.
The integral representation for $I_\la$ gives an integral representation for
the extended Joseph polynomials if $N=2$ and $\ell=1$.

For $N=2$ we also give a presentation for $I_\la$ as a multiple residue
of a suitable rational function.
\medskip

The fact that the generating function $J_\la(z_1,\dots,z_n,h)$ with $\la_1-\la_N\leq 1$ satisfies the \qKZ/
equations at level 1 was conjectured in \cite{artic34} and proved in \cite{KNST} by a different method, by
relating the generating function with non-symmetric Jack polynomials.

\medskip
The results of this paper may be considered as a ``quantization'' of the results of \cite{V, RV, RSV},
where the bundle of (non-quantum) conformal blocks at level 1
in $(\C^N)^{\otimes n}$ was considered.
The bundle is of rank 1 and has a flat connection
defined by the \KZ/ differential operators. 
A rational flat section of the bundle was constructed.
The section was interpreted as a generating function
of the Euler classes of the fixed points of the $GL_n$-action
on a suitable space of partial flags in $\C^n$. A Selberg type identity
was obtained that equates the rational section and
a multidimensional hypergeometric integral.

\medskip
In Section~\ref{sec:qkz} we introduce \qi-conformal blocks and \qKZ/ equations.
In Section~\ref{sec:Ila} we define our main object -- the polynomial
$I_\la(z_1,\dots,z_n,h)$. In Section~\ref{sec:geom} we identify the polynomial
with the generating function of the extended Joseph polynomials of orbital varieties.
In Section~\ref{sec:proofs} we prove all properties of the generating function.
In Section~\ref{sec:Selberg} we prove a \qi-Selberg type identity.
In Section~\ref{sec:altint} we give an alternative integral formula
for $I_\la(z_1,\dots,z_n,h)$, if $N=2$.





\section{Quantum conformal blocks and \qKZ/ equations}
\label{sec:qkz}

\subsection{Operators on representation-valued functions}
\label{sec:ops}

Let $N\ge 2$ be a positive integer. Let $e_{\ij}$, for $i,j=1,\dots,N$, be the
standard generators of the complex Lie algebra $\gln$ satisfying the relations
$[e_{\ij},e_{\sk}]=\delta_{\js}e_{\ik}-\delta_{\ik}e_{\sj}$.
Consider the standard vector
representation $\C^N$ of $\gln$, and its $n$-th tensor
product $V=(\C^N)^{\otimes n}$. The space $V$ splits into the
direct sum of weight subspaces
\be
V=\bigoplus_{\la=(\la_1,\ldots,\la_N)} V[\la],
\ee
where $\sum_{i=1}^N\la_i=n$, and $V[\la]=\{v\in V\ |\ e_{i,i}v=\la_i v\}.$

In this paper we will be concerned with $V$-valued functions of
$z_1,\ldots,z_n$ also depending on a complex parameter $h$.
Now we recall some operators acting on the space of such functions.

\begin{itemize}
\item{}
Elements of $\gln$ act naturally on any factor of the tensor product.
When $x\in\gln$ acts in the $i$-th factor, we denote its action by $x^{(i)}$.

\item{}
Following \cite{MV1} define an operator
\be
e(z)=\sum_{j=1}^n\,\Bigl(z_j-he_{N,N}^{(j)}+
h\sum_{s=j+1}^n\bigl(e_{1,1}^{(s)}-e_{N,N}^{(s)}\bigr)\Bigr)\,e_{1,N}^{(j)}
+ h\sum_{j=2}^{N-1}\sum_{1\le r< s\le N}\!e_{j,N}^{(r)}\,e_{1,j}^{(s)}.
\ee

\item{} Let $P^{(\ij)}$ be the permutation of the $i$-th and $j$-th factors
of $\left(\C^N\right)^{\otimes n}$.

\medskip
\item{}
{\it The deformed $S_n$-action on $V$-valued functions of $z_1\lc z_n$.}
The $i$-th elementary transposition $s_i\in S_n$ acts by the formula
\begin{align}
\label{Sn}
s_i: {}&I(z_1,\ldots,z_n)\,\mapsto{}
\\
& \frac{(z_i-z_{i+1})\,P^{(\ii+1)}+h}{z_i-z_{i+1}}\;
I(\ldots,z_{i+1},z_i,\ldots)-I(\ldots,z_i,z_{i+1},\ldots)\frac{h}{z_i-z_{i+1}}
\notag
\end{align}
This defines an action of $S_n$. Observe that, despite the presence of
denominators, polynomials are mapped to polynomials by elements of $S_n$. In
the whole paper an $S_n$-action will always mean this deformed action unless
otherwise stated.

\medskip
\item{} Let $u$ be a new variable. We define
the following $R$-matrix operator
\be
R^{(\ij)}(u)=\frac{u-h P^{(\ij)}}{u+h}.
\ee
Observe that $R^{(\ij)}(u)R^{(\ij)}(-u)=1$ \,and
\be
\label{YB}
R^{(\ij)}(u-v)\>R^{(\ik)}(u)\>R^{(\jk)}(v)\,=\,
R^{(\jk)}(v)\>R^{(\ik)}(u)\>R^{(\ij)}(u-v)\,.
\ee
\end{itemize}

\subsection{Yangian \,$\Yn$}

The Yangian $\Yn$ is a unital associative algebra with generators
$T_{\ij}^{\{s\}}$, $i,j=1\lc N$, \,$s\in\N$. Organize them into generating
series
\be
T_{\ij}(u)\,=\,\dl_{\ij}+\sum_{s=1}^\infty\,T_{\ij}^{\{s\}}\>u^{-s},\qquad
i,j=1\lc N\,.
\ee
The defining relation in $\Yn$ have the form
\beq
\label{ijkl}
(u-v)\,\bigl[T_{\ij}(u),T_{\kl}(v)\bigr]\,=\,
T_{\kj}(v)\>T_{\il}(u)-T_{\kj}(u)\>T_{\il}(v)\,,
\eeq
for all \,$i,j,k,l=1\lc N$.

The Yangian $\Yn$ contains $U(\gln)$ as a subalgebra. The embedding is given
by \;$e_{\ij}\mapsto T_{\ji}^{\{1\}}$ for any $i,j=1\lc N$.

Let
\be
T(u)\,=\,\sum_{\ij=1}^N\,E_{\ij}\ox T_{\ij}(u)
\ee
where $E_{\ij}$ is the image of \,$e_{\ij}\in\gln$ in $\End(\C^N)$.
Relations~\eqref{ijkl} can be written as the equality of series with
coefficients in \,$\End(\C^N\!\ox\C^N)\ox\Yn$\,:
\be
(u-v+P)\,T^{(1)}(u)\>T^{(2)}(v)\,=\,T^{(2)}(v)\>T^{(1)}(u)\,(u-v+P)\,,
\ee
where \,$P$ is the permutation of the \,$\C^N$ factors,
\;$T^{(1)}(u)=\sum_{\ij=1}^N E_{\ij}\ox 1\ox T_{\ij}(u)$
\;and \;$T^{(2)}(u)=1\ox T(u)$\,.

\smallskip
More information on the Yangian $\Yn$ can be found in~\cite{Mo}. Notice
that the series $T_{\ij}(u)$ here corresponds to the series $T_{j,i}(u)$
in~\cite{Mo}.

\smallskip
The assignment
\be
T(u)\,\mapsto\,R^{(0,1)}(z_1-hu)\dots R^{(0,n)}(z_n-hu)\;
\prod_{i=1}^n\,\frac{z_i-hu+h}{z_i-hu}
\ee
defines an action of the Yangian $\Yn$ on $V$-valued functions of $z_1\lc z_n$.
We identify here the space $(\C^N)^{\ox(n+1)}$ with $\C^N\ox V$ and count the
tensor factors by $0,1\lc n$.

The action of \,$e(z)$ on $V$-valued functions coincide with that of
\;$h\bigl(T_{N,1}^{\{2\}}-T_{N,N}^{\{1\}}\>T_{N,1}^{\{1\}}\bigr)$\,.


\begin{lem}
The Yangian action commutes with the $S_n$ action~\eqref{Sn}.
\end{lem}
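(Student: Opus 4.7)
The strategy is to reduce the claim, for each elementary transposition $s_i$, to a single operator identity and then invoke Yang--Baxter. Since the coefficients of $T(u)$ generate $\Yn$, it suffices to show that its image under the given assignment,
\be
T(u;z)\,:=\,R^{(0,1)}(z_1-hu)\cdots R^{(0,n)}(z_n-hu)\,\prod_{j=1}^n\,\frac{z_j-hu+h}{z_j-hu},
\ee
commutes with each $s_i$. Write \eqref{Sn} as $s_i\cdot I(z) = A_i\,I(s_iz) - (h/w)I(z)$ with $w:=z_i-z_{i+1}$ and $A_i:=(wP^{(i,i+1)}+h)/w$. Expanding $s_i(T(u)I) - T(u)(s_iI)$ at a point $z$, the contributions proportional to $I(z)$ cancel, and since $I(s_iz)$ is arbitrary, the commutation reduces to the purely operator identity
\be
A_i\,T(u;s_iz)\;=\;T(u;z)\,A_i.
\ee

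The key algebraic observation is that $A_i$ is itself a rescaled $R$-matrix. Using $(P^{(i,i+1)})^2=1$, one computes $R^{(i,i+1)}(-w)\,P^{(i,i+1)} = (wP^{(i,i+1)}+h)/(w-h)$, so
\be
A_i \,=\, \frac{w-h}{w}\,R^{(i,i+1)}(-w)\,P^{(i,i+1)}.
\ee
The scalar factor $(w-h)/w$ commutes with every operator on $V$, so the identity above is equivalent to $P^{(i,i+1)}\,T(u;s_iz)\,P^{(i,i+1)} = R^{(i,i+1)}(w)\,T(u;z)\,R^{(i,i+1)}(-w)$. Conjugation by $P^{(i,i+1)}$ relabels the tensor factors $i\leftrightarrow i+1$: the (symmetric) scalar prefactor and all $R^{(0,j)}$ with $j\neq i,i+1$ are unchanged, while the middle pair $R^{(0,i)}(z_{i+1}-hu)\,R^{(0,i+1)}(z_i-hu)$ of $T(u;s_iz)$ becomes $R^{(0,i+1)}(z_{i+1}-hu)\,R^{(0,i)}(z_i-hu)$. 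One application of \eqref{YB}, with indices $(i,i+1,0)$ and spectral parameters $z_i-hu$ and $z_{i+1}-hu$, rewrites this as $R^{(i,i+1)}(w)\,R^{(0,i)}(z_i-hu)\,R^{(0,i+1)}(z_{i+1}-hu)\,R^{(i,i+1)}(-w)$; the outer $R^{(i,i+1)}(\pm w)$ commute past all other factors to produce the right-hand side.

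The only non-routine step -- and what I expect to be the main obstacle -- is the algebraic identification of $A_i$ as a scalar multiple of $R^{(i,i+1)}(-w)P^{(i,i+1)}$, since the denominator $w$ appearing in \eqref{Sn} must be reconciled with the denominator $w+h$ built into $R^{(i,i+1)}$. Once that link is established, the symmetry of the scalar prefactor of $T(u;z)$ in $z_i,z_{i+1}$ together with a single Yang--Baxter move closes the argument.
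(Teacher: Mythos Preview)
Your proof is correct and follows essentially the same route as the paper: the second term of \eqref{Sn} is a scalar multiplication and trivially commutes with the Yangian action, while commutativity with the first term reduces to a single Yang--Baxter move. Your explicit identification $A_i=\tfrac{w-h}{w}\,R^{(i,i+1)}(-w)\,P^{(i,i+1)}$ just makes that reduction precise, and the ``obstacle'' you flag is only the one-line algebra you already carried out.
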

\begin{proof}
The commutativity
with the first term in \eqref{Sn} follows from the Yang-Baxter equation for $R(u)$,
the last formula in Section \ref{sec:ops}. The commutativity with the second term in
\eqref{Sn} is the commutativity with multiplication by functions of $z_1,\dots z_n$.
\end{proof}

\subsection{Singular vectors, \qi-conformal blocks, and \qKZ/ equations}

Let $\la$ be a partition, i.e. assume that
$\la_1\ge\ldots\ge\la_N$.
Define $d(\la)=\la_1-\la_N$.

A vector $v\in V[\la]$ is a {\em singular vector},
if $\sum_{a=1}^N e^{(a)}_{\ij}v=0$ for all $i<j$.

Let $\ell\ge d(\la)$
be a positive integer, and $z=(z_1,\ldots,z_n)\in \C^n$. Following \cite{MV1} we call
$v\in V[\la]$ a {\em level $\ell$ \qi-conformal block}, if it is a singular vector and
\be
e(z)^{\ell-d(\la)+1}v=0.
\ee
Note that if $v\in V[\la]$ is a level $\ell$ \qi-conformal block, then $v$ is a level $\ell'$ \qi-conformal block
for any $\ell'>\ell$.

For $i=1,\ldots,n$, define the \qKZ/ operators at level 1 by the formula
\begin{multline*}
K_i(z_1,\dots,z_n)=R^{(\ii-1)}\left(z_i-z_{i-1}-(N+1)h\right)\cdots
R^{(i,1)}\left(z_i-z_1-(N+1)h\right)\times
\\
\times R^{(i,n)}(z_i-z_n)\cdots R^{(\ii+1)}(z_i-z_{i+1}).
\end{multline*}
The \qKZ/ difference equations at level 1 for a $V[\la]$-valued function $I$ is
the system of equations
\beq
\label{eq:qkz}
I(z_1,\ldots, z_i-(N+1)h,\ldots,z_n)=
K_i(z_1,\ldots,z_n)I(z_1,\ldots,z_i,\ldots,z_n), \ \quad i=1,\ldots,n.\kern-1em
\eeq

\begin{lem} \label{lem:dim}
Let $d(\la)\le 1$. For generic $z=(z_1,\ldots,z_n)\in\C^n$
the space of \qi-conformal blocks at level 1 is at most one-dimensional.
\end{lem}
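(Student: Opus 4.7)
The approach is to reformulate the question as a Yangian counting problem and then exploit the fact that generic parameters make $V$ semisimple as a $\Yn$-module. By the identity at the end of Section~\ref{sec:qkz},
\be
e(z)\,=\,h\bigl(T_{N,1}^{\{2\}}-T_{N,N}^{\{1\}}\>T_{N,1}^{\{1\}}\bigr),
\ee
the level-$1$ conformal block condition on a $\gln$-singular vector $v\in V[\la]$ is the annihilation of $v$ by a specific power of a Yangian element. Thus the condition is entirely Yangian-theoretic, and one can attack it by analyzing how the Yangian action decomposes $V$.

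For generic $z=(z_1,\ldots,z_n)$, the $\Yn$-module $V$ is semisimple and decomposes into irreducible summands $M_T$ parametrized by standard Young tableaux $T$ with $n$ boxes of height at most $N$; each $M_T$ has a unique (up to scalar) $\gln$-highest weight vector $v_T$ whose $\gln$-weight is the shape of $T$. Consequently the $\gln$-singular vectors in $V[\la]$ are spanned by the $v_T$ as $T$ ranges over SYTs of shape $\la$, and the question reduces to: for how many such $T$ is $e(z)^{\ell-d(\la)+1}v_T=0$?

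The key computation is then the action of $e(z)$ on each $v_T$, which is controlled by the Drinfeld-polynomial data of $M_T$ (itself determined by the content of $T$ and the parameters $z_i$). The level-$1$ vanishing becomes an explicit arithmetic constraint on $T$; for $d(\la)\leq 1$ this constraint should be rigid enough to be satisfied by at most one tableau, yielding the bound $\dim\leq 1$. The main obstacle is this last translation: extracting an effective combinatorial criterion on $T$ from $e(z)^{\ell-d(\la)+1}v_T=0$ requires a careful computation of the action of Drinfeld generators on the highest weight vectors $v_T$. Genericity of $z$ is used essentially, to ensure Yangian semisimplicity of $V$ and hence the existence of the basis $\{v_T\}$.
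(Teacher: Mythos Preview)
Your central structural claim is false. For generic $z=(z_1,\ldots,z_n)$ the Yangian module $V=\C^N(z_1)\otimes\cdots\otimes\C^N(z_n)$ is \emph{irreducible}, not a direct sum of submodules $M_T$ indexed by standard Young tableaux; irreducibility of such a tensor product of evaluation vector representations holds whenever no $z_i-z_j$ takes a distinguished value, see for instance~\cite{Mo}. The picture you invoke --- a Yangian decomposition with pieces labelled by SYT --- arises only at special, resonant parameters, and there the tensor product typically fails even to be semisimple. What \emph{is} true is that the $\gln$-singular vectors of weight $\la$ span a space of dimension equal to the number of SYT of shape $\la$ (Schur--Weyl), but that has nothing to do with Yangian submodules, and there is no ``Drinfeld-polynomial data of $M_T$'' to feed into the computation of $e(z)$ on a putative basis $\{v_T\}$. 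So the reduction you propose collapses at step one. You also explicitly flag the final step --- turning $e(z)^{\ell-d(\la)+1}v_T=0$ into a combinatorial constraint and checking it singles out at most one tableau --- as not carried out; even if the decomposition existed, the argument would still be only a sketch.

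The paper's proof is short and of an entirely different nature: it specializes $h\to 0$, where the $q$-conformal block space becomes the classical conformal block space $CB_\la(z)$, and appeals to the Verlinde formula, which for level $1$ and $d(\la)\le 1$ gives $\dim CB_\la(z)=1$ at generic $z$. Since the $q$-conformal block conditions are linear with coefficients polynomial in $h$ and $z$, the dimension of the solution space can only go up under specialization; hence for $h\ne 0$ and generic $z$ the dimension is at most $1$. No Yangian representation theory beyond well-definedness of the $h\to 0$ limit is needed.
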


\begin{proof}
The space of conformal blocks for $h=0$ is defined as
\be
CB_\la(z)=\bigl\{v\in V[\la] \ \text{is a singular vector and}\ \;
\Bigl(\,\sum_{j=1}^N z_ae^{(j)}_{1,N}\Bigr)^{\ell-d(\la)+1}v=0\,\bigr\}.
\ee
For generic $z\in \C^n$ the dimension of $CB_\la(z)$ is calculated by the Verlinde formula. For $\ell=1$ and $d(\la)\le 1$ the Verlinde
formula gives 1, so in this case the space of (non-quantum) conformal blocks for generic $z$ is one-dimensional.
The space of \qi-conformal blocks specializes to $CB_\la(z)$ at $h=0$. At this specialization the dimension may only increase, hence the
dimension of \qi-conformal blocks is at most 1.
\end{proof}

Below we will show that for generic $z$ the dimension is equal to 1.

\section{The minimal degree skew-symmetric polynomial $I_{\la}$}
\label{sec:Ila}

Recall that $\la\in \N^N$ is a partition of $n$.
Define $k(\la)=\sum_{i=1}^N \la_i(\la_i-1)/2$.

Let $v_1,\ldots,v_N$ be the standard basis in $\C^N$,
\;$e_{i,j}v_k=\delta_{\jk}v_i$. For a multi-index $L=(l_1,\ldots,l_n)$ define
$v_L=v_{l_1}\otimes \ldots \otimes v_{l_n}$. A $V[\la]$-valued function $I$ can be expressed as
\be
I=\sum_L f_L(z_1,\ldots,z_n,h)\,v_L
\ee
for multi-indices $L=(l_1,\ldots,l_n)$ with $|\{j:l_j=i\}|=\la_i$.
Denote the multi-index
\beq
\label{eq:standardL}
{L_0}=(\underbrace{1,\ldots,1}_{\la_1},\underbrace{2,\ldots,2}_{\la_2},
\ldots,\underbrace{N,\ldots,N}_{\la_N}).
\eeq

In what follows we will be concerned with the degree of polynomials and
rational functions in $z_i$, $h$. Our convention is that $\deg z_i=\deg h=1$.
With this convention, the deformed $S_n$-action of Section \ref{sec:ops} is
homogeneous. Hence, if $I$ is a skew-symmetric $V[\la]$-valued polynomial, then
its homogeneous parts are also such. Now we study what the homogeneous degrees
of skew-symmetric polynomials can be.

\smallskip
Define another $S_n$-action on functions of $z_1\lc z_n$, where the $i$-th
elementary transposition $s_i\in S_n$ is acting by the formula
\begin{gather}
\label{Snh}
s_i:f\,\mapsto\,\hat s_i f\,,
\\
\hat s_if(z_1\lc z_n)\,=\,\frac{z_i-z_{i+1}+h}{z_i-z_{i+1}}\;
f(\ldots,z_{i+1},z_i,\ldots)-\frac{h}{z_i-z_{i+1}}\;
f(\ldots,z_i,z_{i+1},\ldots)\,.
\notag
\end{gather}
For a permutation $\si\in S_n$ and a multi-index $L=(l_1\lc l_n)$ set
$\si(L)=(l_{\si^{-1}(1)}\lc l_{\si^{-1}(n)})$. The following lemma is obvious.

\begin{lem}
\label{fL}
A $V[\la]$-valued function $I$ is skew-symmetric with respect to
action~\eqref{Sn} if and only if \;$f_{s_i(L)}=-\>\hat s_if_L$ \;for every
multi-index $L$ and every \,$i=1\lc n-1$.
\end{lem}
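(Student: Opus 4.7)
The plan is to expand the skew-symmetry condition $s_i(I)=-I$ in the basis $\{v_L\}$ and match coefficients. First I would rewrite the prefactor in~\eqref{Sn} as
\be
\frac{(z_i-z_{i+1})\,P^{(\ii+1)}+h}{z_i-z_{i+1}}\,=\,P^{(\ii+1)}+\frac{h}{z_i-z_{i+1}}\,,
\ee
so that $s_i(I)$ splits as a permutation piece plus a scalar divided-difference piece. Writing $I=\sum_L f_L\,v_L$ and using $P^{(\ii+1)}v_L=v_{s_i(L)}$ (which follows directly from the convention $\si(L)=(l_{\si^{-1}(1)},\dots,l_{\si^{-1}(n)})$), the permutation piece becomes $\sum_L f_L(\ldots,z_{i+1},z_i,\ldots)\,v_{s_i(L)}$, which I would re-index $M=s_i(L)$ so that every term is a coefficient of a fixed $v_M$.

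Next I would equate the coefficient of $v_M$ on both sides of $s_i(I)=-I$. Introducing the shorthand $(\tau_i f)(z_1,\dots,z_n)=f(\dots,z_{i+1},z_i,\dots)$, the resulting scalar identity reads
\be
\tau_i f_{s_i(M)}\,=\,-f_M\,-\,\frac{h}{z_i-z_{i+1}}\bigl(\tau_i f_M-f_M\bigr).
\ee
Now I would apply $\tau_i$ to both sides (which flips the sign of the denominator $z_i-z_{i+1}$) and collect terms in $\tau_i f_M$ and $f_M$; this yields exactly
\be
f_{s_i(M)}\,=\,-\,\frac{z_i-z_{i+1}+h}{z_i-z_{i+1}}\;\tau_i f_M\,+\,\frac{h}{z_i-z_{i+1}}\,f_M\,=\,-\,\hat s_i f_M,
\ee
the desired relation. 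The converse direction requires no separate argument, since every step above is reversible: given $f_{s_i(M)}=-\hat s_i f_M$ for all $M$, reassembling the sums reproduces $s_i(I)=-I$.

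The only point to watch is the bookkeeping for the re-indexing $L\mapsto s_i(L)$ together with the variable swap $\tau_i$; these must be applied consistently because the prefactor in~\eqref{Sn} mixes the two in a way that looks asymmetric between the $P^{(\ii+1)}$ term and the divided-difference term. Once the re-indexing is done before comparing coefficients, the computation collapses to the elementary rearrangement above, justifying the paper's description of the lemma as obvious.
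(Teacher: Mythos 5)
Your computation is correct and is exactly the direct coefficient-matching argument the paper has in mind when it declares the lemma obvious (the paper supplies no proof); the re-indexing $L\mapsto s_i(L)$ followed by applying $\tau_i$ is handled properly, and the reversibility remark settles the converse.
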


\begin{lem}\label{lem:qdet}
If a polynomial $f(z_1,\ldots,z_n)$ is skew-symmetric with respect to
the $S_n$-action~\eqref{Snh}, then it is divisible by
$$
\prod_{1\leq i<j\leq n} (z_i-z_j+h).
$$
\end{lem}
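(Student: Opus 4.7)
The plan is to show that $f$ vanishes on each of the $\binom{n}{2}$ hyperplanes $\{z_i - z_j + h = 0\}$ with $1 \le i < j \le n$, and then conclude by pairwise coprimality of the corresponding distinct linear factors. First I would rewrite the skew-symmetry $\hat s_i f = -f$ in a cleaner form: clearing the denominator $z_i - z_{i+1}$ in \eqref{Snh} and collecting terms converts $\hat s_i f = -f$ into the polynomial identity
\begin{equation*}
(z_i - z_{i+1} + h)\, f(\ldots, z_{i+1}, z_i, \ldots) \,=\, (z_{i+1} - z_i + h)\, f(\ldots, z_i, z_{i+1}, \ldots),
\end{equation*}
valid for each $i = 1, \ldots, n-1$. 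Setting $z_{i+1} = z_i + h$ makes the left-hand coefficient vanish while the right-hand coefficient becomes $2h \ne 0$, giving the adjacent divisibility $(z_i - z_{i+1} + h) \mid f$.

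To reach a non-adjacent pair $i < j$, I would iterate this identity along the reduced word $s_{j-1} s_{j-2} \cdots s_{i+1} s_i$, which carries $(z_i, z_{i+1}, \ldots, z_j)$ to $(z_{i+1}, \ldots, z_j, z_i)$. At each step the basic identity is applied in the slot that $z_i$ is about to leave, so the ratios compose to
\begin{equation*}
\prod_{k=i+1}^{j}(z_i - z_k + h)\cdot f(z_1, \ldots, z_{i-1}, z_{i+1}, \ldots, z_j, z_i, z_{j+1}, \ldots, z_n) \,=\, \prod_{k=i+1}^{j}(z_k - z_i + h)\cdot f(z_1, \ldots, z_n).
\end{equation*}
Specializing $z_j = z_i + h$ now kills the left-hand side through the $k=j$ factor $(z_i - z_j + h)$, while the right-hand side becomes $2h \cdot \prod_{k=i+1}^{j-1}(z_k - z_i + h) \cdot f|_{z_j = z_i + h}$, whose prefactor is a nonzero polynomial. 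Integrality of the polynomial ring then forces $f|_{z_j = z_i + h} = 0$, i.e.\ $(z_i - z_j + h) \mid f$.

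Since the linear polynomials $(z_i - z_j + h)$ for $1 \le i < j \le n$ are distinct irreducibles, and therefore pairwise coprime in the polynomial ring, their product divides $f$, which is the stated conclusion. The only step demanding real care is the iteration above: the ratios only cascade into the claimed product when one tracks carefully which variable occupies which position after each partial application of the reduced word. I expect this bookkeeping to be the main (though routine) obstacle; everything else is direct manipulation of the defining formula for $\hat s_i$.
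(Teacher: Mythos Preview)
Your proof is correct and follows essentially the same route as the paper's: both derive the polynomial identity $(z_i-z_{i+1}+h)\,f(\ldots,z_{i+1},z_i,\ldots)=(z_{i+1}-z_i+h)\,f$, read off the adjacent divisibility, and then propagate to non-adjacent pairs by iterating swaps. The only cosmetic difference is that the paper advances one factor at a time (substitute $z_{i-1}\leftrightarrow z_i$ into the known divisibility, then use the basic identity and coprimality to pull it back to $f$), whereas you compose the chain of identities along $s_{j-1}\cdots s_i$ into a single product relation before specializing $z_j=z_i+h$; both arguments amount to the same bookkeeping.
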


\begin{proof} Skew-symmetry with respect to $\hat s_{i}$ implies
\begin{equation}
\label{eq:div}
(z_i-z_{i+1}+h)\,f(\ldots,z_{i+1},z_i,\ldots)\,=\,
(z_{i+1}-z_i+h)\,f(\ldots,z_i,z_{i+1},\ldots)\,.
\end{equation}
Therefore $z_i-z_{i+1}+h$ divides $f$.
This further implies that $z_{i-1}-z_{i+1}+h$ divides
$f(\ldots,\alb z_i,\alb z_{i-1},\alb \ldots)$, which using \eqref{eq:div}
again yields that $z_{i-1}-z_{i+1}+h$ divides $f$. Iterating this idea we
obtain the statement of the lemma.
\end{proof}

\begin{lem} \label{lem:deg}
\strut
\begin{enumerate}
\item[(i)] If $I\ne 0$ is a $V[\la]$-valued skew-symmetric polynomial,
then its degree is at least $k(\la)$.
\item[(ii)] A $V[\la]$-valued skew-symmetric polynomial of homogeneous degree
$k(\la)$ is unique up to multiplication by a number.
\item[(iii)] There exists a nonzero $V[\la]$-valued skew-symmetric polynomial
of homogeneous degree $k(\la)$.
\end{enumerate}
\end{lem}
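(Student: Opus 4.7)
The plan is to reduce everything to the single component $f_{L_0}$, where $L_0$ is the multi-index from~\eqref{eq:standardL}; its stabilizer in $S_n$ is the Young subgroup $W_\la=S_{\la_1}\times\cdots\times S_{\la_N}$ permuting the consecutive blocks of indices of sizes $\la_1,\dots,\la_N$. Since the action~\eqref{Sn} is homogeneous, it suffices to work with homogeneous skew-symmetric $I$. A direct computation shows that each $\hat s_i$ is an involution and sends polynomials to polynomials of the same degree: the numerator $(z_i-z_{i+1}+h)f(\ldots,z_{i+1},z_i,\ldots)-hf(z)$ of $\hat s_if$ vanishes at $z_i=z_{i+1}$, hence is divisible by $z_i-z_{i+1}$, and is homogeneous of degree $\deg f+1$. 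Iterating Lemma~\ref{fL} gives $f_{\sigma L_0}=\sgn(\sigma)\,\hat\sigma\, f_{L_0}$ for any reduced expression of $\sigma\in S_n$, so $I\ne 0$ if and only if $f_{L_0}\ne 0$, and $\deg I=\deg f_{L_0}$ for homogeneous $I$.

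Applying Lemma~\ref{fL} to each generator $s_i\in W_\la$ yields $f_{L_0}=-\hat s_if_{L_0}$, so $f_{L_0}$ is skew-symmetric in the variables of each block of $L_0$ separately (with respect to~\eqref{Snh}). Lemma~\ref{lem:qdet}, applied block by block with the remaining $z_j$'s treated as parameters, then forces $f_{L_0}$ to be divisible by the product $D$ of the factors $(z_a-z_b+h)$ over all pairs $a<b$ lying in a common block of $L_0$, a polynomial of degree $\sum_i\la_i(\la_i-1)/2=k(\la)$. This proves~(i); and if $\deg I=k(\la)$, then $\deg f_{L_0}=\deg D$ forces $f_{L_0}=cD$ for some $c\in\C$, which by injectivity of $I\mapsto f_{L_0}$ determines $I$, proving~(ii).

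For~(iii) I would set $f_{L_0}:=D$ and define $f_{\sigma L_0}:=\sgn(\sigma)\,\hat\sigma\, D$. Two consistency checks remain: well-definedness and polynomiality. Well-definedness reduces to the identity $\hat w\,D=\sgn(w)\,D$ for every $w\in W_\la$; for a generator $s_i\in W_\la$ the factor $(z_i-z_{i+1}+h)$ belongs to $D$, the remaining factors of $D$ involving exactly one of $z_i,z_{i+1}$ pair up and swap under $z_i\leftrightarrow z_{i+1}$, and all other factors are untouched, so a short arithmetic with the explicit formula for $\hat s_i$ collapses to $\hat s_iD=-D$. Polynomiality of every $f_L$ then follows from the preservation of polynomials by $\hat s_i$ noted in the first paragraph. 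I expect this $W_\la$-alternation of $D$ to be the most delicate point, but it reduces to a one-line direct computation on a product of $(z_a-z_b+h)$-factors.
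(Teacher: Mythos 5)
Your proposal is correct and follows essentially the same route as the paper: reduce to the coefficient $f_{L_0}$, use the stabilizer $S_{\la_1}\times\cdots\times S_{\la_N}$ together with Lemma~\ref{lem:qdet} applied block by block to get divisibility by $D_0$ (hence degree $\ge k(\la)$ and uniqueness in degree $k(\la)$), and construct the extremal polynomial by the coset sum $\sum\sgn(\si)\,\hat\si(D_0)\,v_{\si(L_0)}$, which is exactly the paper's formula~\eqref{eq:Ilambda}. Your extra checks (that $\hat s_i$ is a degree-preserving involution and that $\hat s_i D_0=-D_0$ for $s_i$ in the stabilizer, giving well-definedness and skew-symmetry of the constructed section) only make explicit what the paper leaves implicit.
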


\begin{proof}
(i)\enspace
By Lemma~\ref{fL}, a $V[\la]$-valued skew-symmetric polynomial $I$ is uniquely
determined by the coefficient $f_{L_0}\,$, and \,$\deg\,I=\deg f_{L_0}\,$.

Denote by $S_{\la_1}\times\ldots\times S_{\la_N}\subset S_n$
the isotropy subgroup of $L_0\,$. By Lemma~\ref{fL}, for
$s_i\in S_{\la_1}\times\ldots\times S_{\la_N}$ we have
\;$\hat s_if_{L_0}=-f_{L_0}$.
Using Lemma \ref{lem:qdet} for each $S_{\la_i}$ we obtain that $f_{L_0}$ is divisible by
\beq
\label{eq:standardD}
D_0\,=\,\prod_{1\le a<b\le \la_1} (z_a-z_b+h)
\prod_{\la_1< a<b\le \la_1+\la_2} (z_a-z_b+h) \cdots
\prod_{n-\la_N< a<b\le n} (z_a-z_b+h)
\eeq
and has degree at least $k(\la)$.

\smallskip
(ii)\enspace
If $f_{L_0}$ has degree $k(\la)$, then it is proportional to $D_0$.

\smallskip
(iii)\enspace
Define
\beq
\label{eq:Ilambda}
I_\la\,=\sum_{\si\in S_n/S_{\la_1}\times\ldots\times S_{\la_N}}\!\!
\sgn(\si)\,\hat\si(D_0)\,v_{\si(L_0)}\,,
\eeq
where \,$\sgn(\si)$ is the sign of the shortest permutation in the coset,
and \,$\hat\si$ \,denotes action~\eqref{Snh}. Then $I_\la$ is a nonzero
$V[\la]$-valued skew-symmetric polynomial of homogeneous degree $k(\la)$.
\end{proof}

The $V[\la]$-valued polynomial $I_\la$, defined by \eqref{eq:Ilambda},
is the main object of this paper. Now we reformulate its definition.

\begin{defn}
Let $\la\in\N^N$ be a partition of $n$. Let $I_\la$ be
the $V[\la]$-valued skew-symmetric polynomial of degree $k(\la)$
normalized in such a way that the coefficient of $v_{L_0}$ is~$D_0$,
see \eqref{eq:standardL}, \eqref{eq:standardD}.
\end{defn}

\begin{example} \rm We have
\begin{align*}
I_{(1,1)}={}& \>v_{12}-v_{21}\,,
\\[4pt]
I_{(2,1)}={}& \>(z_1-z_2+h)v_{112}+(z_3-z_1-2h)v_{121}+(z_2-z_3+h)v_{211}\,,
\\[4pt]
I_{(2,2)}={}& \>(z_1-z_2+h)(z_3-z_4+h)(v_{1122}+v_{2211}) + (z_1-z_4+2h)(z_2-z_3+h)(v_{1221}+v_{2112})
\\
&{}+ \bigl(-(z_1-z_2+h)(z_3-z_4+h)-(z_1-z_4+2h)(z_2-z_3+h)\bigr)
(v_{1212}+v_{2121}).
\end{align*}
Note that the last coefficient function in the third formula does not factor.
\end{example}

\begin{rem}
\label{rem:quasiclass}
In the quasiclassical limit $h=0$, the vector $I_\la$ is
the minimal degree skew-symmetric polynomial under the $S_n$-action
\be
s_i^{h=0}: I\,\mapsto\,P^{(\ii+1)} I(\ldots,z_{i+1},z_i,\ldots).
\ee
Its explicit form is
\beq\label{Ihzero}
I_\la^{h=0}=\,
\sum_L\,\Bigl(\sgn(L)\!\!\!\prod_{a<b, l_a=l_b} \!\!\! (z_a-z_b)\Bigr)\,v_L\,,
\eeq
with an appropriately defined \,$\sgn(L)$. Rescaled by the the discriminant
\,$\prod_{a<b}(z_a-z_b)$\,, this function was studied in \cite{RV}, see also
\cite{RSV}. It is shown there that this function satisfies (non-quantum)
conformal block properties and (non-quantum) \KZ/ differential equations.
\end{rem}

%
%
%
\newdimen\linkpatternunit%
\newcount\linkpatternsize%
\newcount\lpsize
%
\newif\iflinkpatterninverted%
\newif\iflinkpatternhalfnumbered
\newif\iflinkpatternnumbered%
\newif\iflinkpatterntikzstarted%
\pgfkeys{/linkpattern/.cd,inverted/.is if=linkpatterninverted,numbered/.is if=linkpatternnumbered,tikzstarted/.is if=linkpatterntikzstarted,halfnumbered/.is if=linkpatternhalfnumbered,vertexcolor/.store in=\linkpatternvertexcolor,edgecolor/.store in=\linkpatternedgecolor,size/.code={\linkpatternsize=#1},unit/.code={\linkpatternunit=#1},unpaired length/.store in=\linkpatternunpairedlength}
\linkpatterninvertedfalse%
\linkpatternnumberedfalse%
\linkpatternhalfnumberedfalse%
\linkpatterntikzstartedfalse%
\linkpatternunit=0.65cm%
\linkpatternsize=0%
\def\linkpatternvertexcolor{red}%
\def\linkpatternedgecolor{blue}%
\def\linkpatternunpairedlength{0.5}%
\newcommand\linkpattern[2][]{%
{
\pgfkeys{/linkpattern/.cd,#1}
\iflinkpatternnumbered%
\tikzset{vertex/.style={circle,draw=black,fill=\linkpatternvertexcolor,inner sep=1.5pt,label=\iflinkpatterninverted above\else below\fi:{$\scriptstyle ##1$}}}%
\else%
\tikzset{vertex/.style={circle,draw=black,fill=\linkpatternvertexcolor,inner sep=1.5pt}}%
\fi%
\tikzset{edge/.style={bend \iflinkpatterninverted right\else left\fi=60,draw,very thick,\linkpatternedgecolor}}%
\iflinkpatterntikzstarted\else%
\iflinkpatterninverted%
\begin{tikzpicture}[x=\linkpatternunit,y=-\linkpatternunit,baseline=0]%
\else%
\begin{tikzpicture}[x=\linkpatternunit,y=\linkpatternunit,baseline=0]%
\fi%
\fi%
\global\lpsize=\linkpatternsize
\foreach \x/\y in {#2}
{
\ifnum\x>\lpsize\global\lpsize=\x\fi
\ifnum\y>\lpsize\global\lpsize=\y\fi
}
%
\draw (1,0) -- (\lpsize,0);
\foreach\x in {1,...,\lpsize}
\coordinate[vertex={\x}] (v\x) at (\x,0);
\foreach \x/\y in {#2}
{
\ifnum\x<\y
\draw[edge] (v\x) to (v\y);
\else
\ifnum\x>\y
\draw[edge] (v\y) to (v\x);
\else
\draw[edge] (v\x) -- ++(0,\linkpatternunpairedlength);
\fi
\fi
}
\foreach \x/\y in {#2}
{
\ifnum\x=\y
\draw[edge] (v\x) -- (v\x |- current bounding box.north west);
\fi
}
\iflinkpatterntikzstarted\else%
\end{tikzpicture}%
\fi%
}}%
%
\newcommand\tangle[2][]{%
{
\pgfkeys{/linkpattern/.cd,#1}
\def\internalwithoutprime##1'{##1}%
\def\withoutprime##1{\expandafter\internalwithoutprime##1}%
\def\primetest##1'{}%
\def\hasaprime##1{\expandafter\primetest##1''}
\iflinkpatternnumbered%
\iflinkpatternhalfnumbered%
\tikzset{vertexa/.style={circle,draw=black,fill=\linkpatternvertexcolor,inner sep=1.5pt}}%
\else%
\tikzset{vertexa/.style={circle,draw=black,fill=\linkpatternvertexcolor,inner sep=1.5pt,label=\iflinkpatterninverted below\else above\fi:{$\scriptstyle ##1$}}}%
\fi
\tikzset{vertexb/.style={circle,draw=black,fill=\linkpatternvertexcolor,inner sep=1.5pt,label=\iflinkpatterninverted above\else below\fi:{$\scriptstyle ##1$}}}%
\else%
\tikzset{vertexa/.style={circle,draw=black,fill=\linkpatternvertexcolor,inner sep=1.5pt}}%
\tikzset{vertexb/.style={circle,draw=black,fill=\linkpatternvertexcolor,inner sep=1.5pt}}%
\fi%
\tikzset{edge/.style={draw,very thick,\linkpatternedgecolor,bend left=45}}%
\iflinkpatterntikzstarted\else%
\iflinkpatterninverted%
\begin{tikzpicture}[x=\linkpatternunit,y=-\linkpatternunit,baseline=0.5\linkpatternunit]%
\else%
\begin{tikzpicture}[x=\linkpatternunit,y=\linkpatternunit,baseline=-0.5\linkpatternunit]%
\fi%
\fi%
\global\lpsize=\linkpatternsize
\foreach \x/\y in {#2}%
{%
\if\hasaprime\x %
\ifnum\lpsize<\x\global\lpsize=\expandafter\withoutprime\x\fi%
\else%
\ifnum\x>\lpsize\global\lpsize=\x\fi%
\fi%
\if\hasaprime\y %
\ifnum\lpsize<\y\global\lpsize=\expandafter\withoutprime\y\fi%
\else%
\ifnum\y>\lpsize\global\lpsize=\y\fi%
\fi%
}%
\draw (1,-1) -- (\lpsize,-1);
\draw (1,0) -- (\lpsize,0);
\foreach\x in {1,...,\lpsize}
{
\node[vertexa={\bar\x}] (\x') at (\x,0) {};
\node[vertexb={\x}] (\x) at (\x,-1) {};
}
\foreach \x/\y in {#2}
\draw[edge] (\x) .. controls ($(0,-0.5)!(\x)!(1,-0.5)$) and ($(0,-0.5)!(\y)!(1,-0.5)$) .. (\y);
\iflinkpatterntikzstarted\else%
\end{tikzpicture}%
\fi%
}}%
%
%
\newcommand\circlelinkpattern[2][]{%
{
\pgfkeys{/linkpattern/.cd,#1}
\iflinkpatternnumbered%
\tikzset{vertex/.style 2 args={circle,draw=black,fill=\linkpatternvertexcolor,inner sep=1.5pt,label={[shift={($(360/##2*##1:0.3)$)}]center:$\scriptstyle ##1$}}}
\else%
\tikzset{vertex/.style={circle,draw=black,fill=\linkpatternvertexcolor,inner sep=1.5pt}}%
\fi%
\tikzset{edge/.style={draw,very thick,\linkpatternedgecolor}}%
\iflinkpatterntikzstarted\else%
\iflinkpatterninverted%
\begin{tikzpicture}[x=\linkpatternunit,y=-\linkpatternunit,baseline=0]%
\else%
\begin{tikzpicture}[x=\linkpatternunit,y=\linkpatternunit,baseline=0]%
\fi%
\fi%
\global\lpsize=\linkpatternsize
\edef\mylist{#2}
\foreach \x/\y in \mylist
{
\ifnum\x>\lpsize\global\lpsize=\x\fi
\ifnum\y>\lpsize\global\lpsize=\y\fi
}
%
%
\draw (0,0) circle (1);
\foreach\x in {1,...,\lpsize}
{
\coordinate[vertex={\x}{\the\lpsize}] (v\x) at ($(360/\the\lpsize*\x:1)$);
}
\foreach \x/\y/\z in \mylist
{
\ifx\y\z%
\draw[edge] (v\x) .. controls ($0.5*(v\x)$) and  ($0.5*(v\y)$) .. (v\y);
\else
\draw[edge,decoration={markings,mark = at position 0.5 with { \arrow[semithick]{\z} }},postaction={decorate}] (v\x) .. controls ($0.5*(v\x)$) and  ($0.5*(v\y)$) .. (v\y);
\fi
}
\iflinkpatterntikzstarted\else%
\end{tikzpicture}%
\fi%
}}%
%
\newcommand\tanglelinkpattern[3][]{%
{
\pgfkeys{/linkpattern/.cd,#1}
\iflinkpatterninverted%
\begin{tikzpicture}[x=\linkpatternunit,y=-\linkpatternunit,baseline=0]%
\else%
\begin{tikzpicture}[x=\linkpatternunit,y=\linkpatternunit,baseline=0]%
\fi%
\tangle[#1,tikzstarted,halfnumbered]{#2}
\linkpattern[#1,tikzstarted,numbered=false]{#3}
\end{tikzpicture}%
}}

%
\newdimen{\cellsize}
\newcommand\bigboxes{\setlength{\cellsize}{18pt}\def\boxformat{}}
\newcommand\medboxes{\setlength{\cellsize}{14pt}\def\boxformat{}}
\newcommand\smallboxes{\setlength{\cellsize}{8pt}\def\boxformat{\scriptstyle}}
\medboxes
\newsavebox{\cellcontent}
\def\hidehrule#1#2{\kern-#1
  \hrule height#1 depth#2 \kern-#2 }%
\def\hidevrule#1#2{\kern-#1{\dimen\cellcontent=#1%
    \advance\dimen\cellcontent by#2\vrule width\dimen\cellcontent}\kern-#2 }%
\def\makeblankbox#1#2{\hbox{\lower\dp\cellcontent\vbox{\hidehrule{#1}{#2}%
    \kern-#1 
    \hbox to \wd\cellcontent{\hidevrule{#1}{#2}%
      \raise\ht\cellcontent\vbox to #1{}
      \lower\dp\cellcontent\vtop to #1{}
      \hfil\hidevrule{#2}{#1}}%
    \kern-#1\hidehrule{#2}{#1}}}}
\newcommand\cellify[1]{\defaultcell%
\sbox{\cellcontent}{\vbox to \cellsize{%
\vfill%
\hbox to \cellsize{\hfill$\boxformat #1$\hfill}
\vfill}}%
\rlap{\drawnbox}
\usebox{\cellcontent}}
\newcommand\tableau[1]{\vtop{\let\\\cr
\baselineskip -16000pt \lineskiplimit 16000pt \lineskip 0pt
\ialign{&\cellify{##}\cr#1\crcr}}}
\newcommand\defaultcell{\gdef\drawnbox{
\makeblankbox{0.2pt}{0.2pt}
}}
\newcommand\graycell{\gdef\drawnbox{%
\rlap{\color{Gray}\vrule width \cellsize height \cellsize}%
\makeblankbox{0.2pt}{0.2pt}
}}
\newcommand\thickcell{\gdef\drawnbox{
\makeblankbox{0.2pt}{0.1\cellsize}%
}}
\newcommand\missingcell{\gdef\drawnbox{}}
\newcommand\vdotscell{\gdef\drawnbox{\kern-1.6pt\vbox{\baselineskip=4pt\lineskiplimit=0pt\hbox{}\hbox{.}\hbox{.}\hbox{.}\hbox{}}}}
\newcommand\hdotscell{\gdef\drawnbox{\vbox to \cellsize{\hbox{\kern1pt$\ldotp\ldotp\ldotp$}}}}
\newcommand\vhdotscell{\gdef\drawnbox{\rlap{\kern-1.6pt\vbox{\baselineskip=4pt\lineskiplimit=0pt\hbox{}\hbox{.}\hbox{.}\hbox{.}\hbox{}}}\vbox to \cellsize{\hbox{\kern1pt$\ldotp\ldotp\ldotp$}}}}
%
%
\smallboxes 
\newcommand\rk[1]{{\smaller\bf [#1]}}
\section{Geometric description of $I_\la$}\label{sec:geom}
In this section we provide the connection of the $I_\la$
to geometry which was advertised in the introduction.

\newcommand\codim{\mathop{\mathrm{codim}}\nolimits}
\newcommand\n{\frak{n}}
\renewcommand\O{\mathcal{O}}
\newcommand\g{\frak{g}}
\newcommand\SYT{\mathrm{SYT}(\lambda)}
\newcommand\tl[1][{i,i+1}]{E^{(#1)}}
\subsection{Orbital varieties and Joseph representation}
\subsubsection{Orbital varieties}
Consider some conjugacy class of nilpotent elements inside 
$\g=\frak{gl}_n$.
Such a conjugacy class is characterized by the
unordered set of sizes of the Jordan blocks,
which form a {\em partition}\/ $\lambda'=(\lambda'_1\ge\lambda'_2\ge\cdots\ge\lambda'_K)$ of $n$. 
It is more convenient to use instead of $\lambda'$ its {\em conjugate partition}
$\lambda=(\lambda_1\ge\lambda_2\ge\cdots\ge \lambda_N)$.
If we depict partitions as Young diagrams, then the diagram of $\lambda$
is the transpose of that of $\lambda'$: the lengths of its {\em columns}\/
are the sizes of Jordan blocks. For example, 
for one block of size $3$ and one block of size $1$,
we use $\lambda=(2,1,1)$, 
that is $\tableau{&\\ \\ \\}$.

Let $\bar D_\la\subset \frak g$ be the closure of the conjugacy class
$D_\la$ associated to the partition $\lambda$.
$\bar D_\lambda$ is known to be an irreducible algebraic variety, but 
if we denote by $\n$ the space of strict upper triangular matrices, then
the intersection $\O_\lambda:=\bar D_\lambda\cap \n$ is in general reducible:
its geometric components (i.e., reduced irreducible components)
are called {\em orbital varieties}.

Given an element of
$x\in \O_\lambda$, note that $x$ leaves stable the natural
flag $0\subset \mathbb{C}\subset\mathbb{C}^2\subset\cdots\subset\mathbb{C}^n$
associated to the standard basis. So the restriction of $x$ to $\mathbb{C}^i$,
$i=0,\ldots,n$, is a nilpotent element to which can be attached
a partition of $i$ as described above, say $\varphi_i(x)$. Note that generically,
$\varphi_n(x)=\lambda$. The following results were found:
\begin{thm*}[Spaltenstein \cite{Spal}]\label{thm:spal}
Let $\lambda$ be a partition of $n$, and $x\in\O_\lambda$.
\begin{itemize}
\item
The sequence $\varphi_i(x)$ forms an increasing chain of Young diagrams, 
so there is a map $\varphi$
from $\O_\lambda$ to
the set of {\em standard Young tableaux}\/ with $n$ boxes 
(i.e., fillings of Young diagrams
with numbers $\{1,\ldots,n\}$ which are increasing along rows and columns)
such that the subdiagram of $\varphi(x)$ made of the boxes labelled from $1$ to $i$
is $\varphi_i(x)$.
\item
The irreducible components $\O_{\lambda;\alpha}$ 
of $\O_\lambda$ are the closures
of $\varphi^{-1}(\alpha)$, where $\alpha$ runs over $\SYT$,
the set of standard Young tableaux
of shape $\lambda$.
\item
The $\O_{\lambda;\alpha}$ all have the same dimension which is one half
of that of $D_\lambda$.
\end{itemize}
\end{thm*}
The dimension of $D_\lambda$ is easily calculated by computing
the stabilizer of any element of the orbit and
its dimension $\sum_{i,j}\min(\lambda'_i,\lambda'_j)=\sum_i \lambda_i^2$
(cf \cite[p.~11]{Humph-conjclass}), 
so that we find:
\begin{equation}\label{dimO}
\dim\O_{\lambda;\alpha}=\frac{n(n-1)}{2}-\frac{1}{2}\sum_{i=1}^n \lambda_i(\lambda_i-1)
\,.
\end{equation}

When there is no risk of confusion, we shall drop the index
$\lambda$: $\O_{\lambda,\alpha}=\O_\alpha$.

\subsubsection{The case $\lambda_1-\lambda_N\le 1$}
Define the {\em dominance order}\/ on partitions by
$\lambda\prec\mu$ iff $\sum_{i\le k}\lambda_i\le \sum_{i\le k}\mu_i$ for all
$k$. Then one has \cite[p.~139]{Humph-conjclass} 
$D_\mu\subset \bar D_\la$ iff $\lambda\prec\mu$.
The next proposition gives
a more explicit description of $\bar D_\la$ and $\O_\la$
in a special case which is important for our purposes:
\begin{prop}\label{prop:orbrect}
Let $\lambda=(\lambda_1\ge\cdots\ge\lambda_N)$ 
be a partition such that $\lambda_1-\lambda_N\le 1$.
Then
\begin{align*}
\bar D_\la&=\{x\in\g: x^N=0\}
\,,
\\
\O_\la&=\{x\in\n: x^N=0\}
\,.
\end{align*}
\end{prop}
\begin{proof}
According to the discussion above,
the first equality amounts to saying that among all the partitions
$\mu$ of $n$ with at most $N$ parts, $\lambda$ is the {\em smallest}\/ for
the dominance order. By direct computation, if $n=Nq+r$,
$\lambda=(\underbrace{q+1,\ldots,q+1}_r,\underbrace{q,\ldots,q}_{N-r})$;
and assuming a $\mu=(\mu_1,\ldots,\mu_N)$, $\sum_i \mu_i=n$,
breaks one of the inequalities
$\sum_{i\le k} \lambda_i\le\sum_{i\le k} \mu_i$ leads to a contradiction
with $\mu$ being decreasing.

The second equality follows immediately from the first.
\end{proof}

\newcommand\zero{\cdot}
\newsavebox{\tababox}
\savebox{\tababox}{\tableau{1\\2}}
\newcommand\taba{{\usebox{\tababox}}}
\newsavebox{\tabbbox}
\savebox{\tabbbox}{\tableau{1&2\\3}}
\newcommand\tabb{{\usebox{\tabbbox}}}
\newsavebox{\tabcbox}
\savebox{\tabcbox}{\tableau{1&3\\2}}
\newcommand\tabc{{\usebox{\tabcbox}}}
\newsavebox{\tabdbox}
\savebox{\tabdbox}{\tableau{1&2\\3&4}}
\newcommand\tabd{{\usebox{\tabdbox}}}
\newsavebox{\tabebox}
\savebox{\tabebox}{\tableau{1&3\\2&4}}
\newcommand\tabe{{\usebox{\tabebox}}}
\begin{example}
If $\lambda=(1,1)$ there is only one tableau $\taba$,
and $\O_\taba=\n=\left\{\begin{pmatrix}\zero&\star\\ \zero&\zero\end{pmatrix}\right\}$, where $\star$ denotes a free entry and $\zero$ a zero in the lower triangle.

Next,
\begin{multline*}
\O_{(2,1)}=\left\{\begin{pmatrix}\zero&x_{12}&x_{13}\\\zero&\zero&x_{23}\\\zero&\zero&\zero
\end{pmatrix}: x_{12}x_{23}=0\right\}
=
\O_\tabb\cup \O_\tabc\\
\O_\tabb=
\left\{\begin{pmatrix}\zero&0&\star\\\zero&\zero&\star\\\zero&\zero&\zero
  \end{pmatrix}\right\}
\qquad
\O_\tabc
=
\left\{\begin{pmatrix}\zero&\star&\star\\\zero&\zero&0\\\zero&\zero&\zero
  \end{pmatrix}\right\}
\,.
\end{multline*}
Similarly, one computes 
\begin{multline*}
\O_{(2,2)}=\{x\enskip 4\times 4: x^2=0\}
=\O_{\tabd}\cup\O_{\tabe}
\\
\O_{\tabd}=
\left\{\begin{pmatrix}\zero&0&\star&\star\\\zero&\zero&\star&\star\\\zero&\zero&\zero&0\\\zero&\zero&\zero&\zero
  \end{pmatrix}\right\}
\qquad
\O_{\tabe}
=
\left\{\begin{pmatrix}\zero&x_{12}&x_{13}&\star\\\zero&\zero&0&x_{24}\\\zero&\zero&\zero&x_{34}\\\zero&\zero&\zero&\zero
  \end{pmatrix}: x_{12}x_{24}+x_{13}x_{34}=0\right\}
\,.
\end{multline*}
\end{example}

\subsubsection{Hotta's construction of the Joseph representation}
\label{sec:hotta}
In the rest of Section~\ref{sec:geom} we fix a partition $\la$.
Define $W_\lambda$ to be the finite-dimensional space of maps
from $\SYT$ to $\mathbb{C}$.
Its dimension is that of the irreducible representation of $S_n$ associated
to $\la$. Orbital varieties provide us with 
a natural action of $S_n$ on $W_\la$,
which we describe now following \cite{Hotta}.

Given $i=1,\ldots,n-1$, define $\n_i$ to be the subspace
of $x\in\n$ whose entry $x_{i,i+1}$ vanishes; and $P_i$ to be
the parabolic subgroup of $GL_n$ made of invertible matrices $x$
which are upper triangular {\em except}\/ possibly at $x_{i+1,i}$.
Note that the map $f: P_i \times \g \to \g$, $f(p,x)=pxp^{-1}$
sends $P_i\times\n_i$ to $\n_i$.
We shall now describe the action of the elementary
transposition $(i,i+1)\in S_n$ by giving its matrix elements.

Given a $\alpha\in \SYT$, two situations can occur:
\begin{enumerate}
\item Either $\O_\alpha \subset \n_i$, in which case
set $m_{i;\alpha,\beta}=-\delta_{\alpha,\beta}$ for all $\beta$.
\item Or $\O_\alpha \not\subset \n_i$, in which case consider
the scheme-theoretic intersection (i.e., with multiplicities)
$\O_\alpha \cap \n_i$, and then its image
by $f$, i.e.,
$f(P_i\times(\O_\alpha \cap \n_i))$
(again keeping track of the degree of the map on each
irreducible component of $\O_\alpha\cap\n_i$);
Clearly $f(P_i\times(\O_\alpha \cap \n_i))\subset \O_\la \cap \n_i$, 
so its top-dimensional components
are again orbital varieties (necessarily distinct from $\alpha$).
Then set 
\[
m_{i;\alpha,\beta}=\begin{cases}1&\beta=\alpha\\
\text{multiplicity of }\O_{\beta}\text{ in }
f(P_i\times(\O_\alpha \cap \n_i))&\beta\ne\alpha
\end{cases}
\]
\end{enumerate}
Finally, if $(e_\alpha)_{\alpha\in\SYT}$ is the standard basis of $W_\lambda$:
$e_{\beta}(\alpha)=\delta_{\alpha,\beta}$,
then define
\begin{equation}\label{exchdual}
\rho^{(i,i+1)} e_{\beta} = -\sum_{\alpha\in\SYT} m_{i;\alpha,\beta}e_{\alpha}
\,.
\end{equation}

\begin{thm}[Hotta]\label{thm:hotta}
The $(m_{i;\alpha,\beta})_{\alpha,\beta\in\SYT}$, $i=1,\ldots,n-1$,
satisfy the symmetric group relations;
and equipped with the action $\rho^{(i,i+1)}$ above, $W_\lambda$ is the standard
$S_n$-module associated to the partition $\la$.
\end{thm}
\let\oldbordermatrix\bordermatrix
\def\bordermatrixint#1\end{\oldbordermatrix{#1}\end}
\renewenvironment{bordermatrix}{\let\\\cr
\bordermatrixint}{}
\begin{example}
For the three cases $(1,1)$, $(2,1)$, $(2,2)$, we find:
\begin{alignat*}{3}
\lambda&=(1,1):&\qquad\rho^{(1,2)}&=
{\begin{bordermatrix}&\taba\cr\taba&-1\end{bordermatrix}}
\,,
\\
\lambda&=(2,1):&\qquad\rho^{(1,2)}&=
{\begin{bordermatrix}&\tabb&\tabc\cr
\tabb&1&0\\ \tabc&-1&-1
\end{bordermatrix}}
&\quad\rho^{(2,3)}&=
{\begin{bordermatrix}&\tabb&\tabc\cr
\tabb&-1&-1\\\tabc&0&1\end{bordermatrix}}
\,,
\\
\lambda&=(2,2):&\qquad\rho^{(1,2)}&=\rho^{(3,4)}=
{\begin{bordermatrix}&\tabd&\tabe\cr
\tabd&1&0\\\tabe&-1&-1\end{bordermatrix}}
&\quad\rho^{(2,3)}&=
{\begin{bordermatrix}&\tabb&\tabc\cr
\tabd&-1&-1\\\tabe&0&1\end{bordermatrix}}
\,.
\end{alignat*}
\end{example}

\subsection{Extended Joseph polynomials}
Now consider the (complex) torus $T=(\mathbb{C}^\times)^{n+1}$
acting on $\n$ as follows: the first $n$ variables correspond to
conjugation by diagonal matrices, whereas the last variable corresponds to
scaling. Explicitly, if $x\in \n$ has entries $x_{ij}$ and $t=(t_1,\ldots,t_n,q)\in T$, then $(t\cdot x)_{ij}=q\,t_i t_j^{-1} x_{ij}$.

Observe that $\O_\lambda$, and therefore its irreducible components
$\O_{\lambda,\alpha}$, are invariant by $T$-action. Thus, they have
natural Poincar\'e-dual classes in equivariant cohomology. It is convenient
to describe them in the language of multidegrees (see \cite{MS-book}).

\subsubsection{Multidegrees}
\newcommand\mdeg{\mathop{\rm mdeg}\nolimits}

Given a  torus $T$ acting 
linearly on a complex vector space $W$, we assign
to a closed $T$-invariant sub-scheme $X\subseteq W$ 
its {\em multidegree}\/ $\mdeg_W X \in \mathrm{Sym}(T^*)$ 
(here $T^*$ is viewed as a lattice
inside the dual of the Lie algebra of $T$),
which can be computed inductively using the following properties 
(as in \cite{Jos-mdeg}):
\begin{enumerate}
\item If $X=W=\{0\}$, then $\mdeg_W X = 1$.
\item If the scheme $X$ has top-dimensional components $X_i$, 
  where $m_i>0$ denotes the multiplicity of $X_i$ in $X$, 
  then $\mdeg_W X = \sum_i m_i\ \mdeg_W X_i$. 
\item Assume $X$ is a variety, 
  and $H$ is a $T$-invariant hyperplane in $W$.
  \begin{enumerate}
  \item If $X\not\subset H$, then $\mdeg_W X = \mdeg_H (X\cap H)$.
  \item If $X\subset H$, then 
    $ \mdeg_W X = [W/H]_T \mdeg_H X$,
where $[\cdot]_T\in T^\star$ denotes the weight of the $T$-action.
  \end{enumerate}
\end{enumerate}
One can readily see from these properties that $\mdeg_W X$ is 
homogeneous of degree $\codim_W X$, and is a positive sum of products
of the weights of $T$ on $W$. 

In our case, $\mathrm{Sym}(T^\star)\cong \mathbb{Z}[z_1,\ldots,z_n,h]$,
and the weights on $\mathbb{C}[x_{ij}]_{1\le i<j\le n}$ of the $T$-action
are defined by
\[
[x_{ij}]_T=h+z_i-z_j
\]

The multidegree of $\O_{\alpha}$ with respect to this $T$-action,
$J_\alpha:=\mdeg_\n \O_{\alpha}$, is called the {\em extended Joseph
polynomial}\/ of $\O_{\alpha}$. $J_\alpha$ is by definition
a homogeneous polynomial in $\mathbb{Z}[z_1,\ldots,z_n,h]$,
of degree the
codimension of $\O_{\alpha}$,
which is nothing but $k(\lambda)=\frac{1}{2}\sum_i \lambda_i(\lambda_i-1)$ 
defined in Section~\ref{sec:Ila}, according to Eq.~\eqref{dimO}.
The reason for the name, first
given in \cite{artic34}, is that
if we remove the scaling equivariance, i.e., set the variable $h=0$,
these polynomials reduce to the ones
Joseph introduced in \cite{Jos-orbvar}.

\begin{example}
All the examples of orbital varieties given above are complete intersections
(the number of equations is equal to the codimension); their multidegree
is therefore the product of weights of the equations:
\begin{align*}
J_{\taba}&=1
\,,
\\
J_{\tabb}&=h+z_1-z_2
\,,
\\
J_{\tabc}&=h+z_2-z_3
\,,
\\
J_{\tabd}&=(h+z_1-z_2)(h+z_3-z_4)
\,,
\\
J_{\tabe}&=(h+z_2-z_3)(2h+z_1-z_4)
\,.
\\
\end{align*}
\end{example}

\subsubsection{Divided Differences}\label{sec:divdiff}
The geometric construction given in Section~\ref{sec:hotta} has a direct
counterpart for multidegrees. Here our reference is \cite[Sect.~5.1.1]{artic39}.

Define the {\em divided difference operator}\/ $\der_i=\frac{1}{z_i-z_{i+1}}(\tau_i-1)$ where $\tau_i$ is permutation of variables $z_i$ and $z_{i+1}$.
Note that both $\der_i$ and $\tau_i$ are operators leaving
$\mathbb{Z}[z_1,\ldots,z_n]$ stable.

Let $B$ be the group of invertible upper triangular matrices of size $n$.
We use the following special case of \cite{Jos-orbvar} (see also \cite[Lemma 8]{artic39}):
\begin{lem}\label{lem:bs}
Let $X\subset\n$ be a $B$-invariant variety such
that $f(P_i\times X)\subset \n$. Let $k$ be the
degree of the map $f|_X: (P_i\times X)/B \to \n$, or zero if the generic
fiber is infinite (i.e., $X$ is $P_i$-invariant). Then
\[
-\frac{1}{h+z_{i+1}-z_i}
\der_i ((h+z_{i+1}-z_i) \mdeg_{\n}X)=k \mdeg_\n f(P_i\times X)
\,.
\]
\end{lem}
The proof is a standard equivariant localization argument which we shall
not repeat here.

We now discuss separately the two cases of the construction
of Section~\ref{sec:hotta}. Given a $\alpha\in\SYT$,
\begin{enumerate}
\item If $\O_\alpha \subset \n_i$, then $f(P_i\times \O_\alpha)\subset \n_i\cap\O_\la$,
is irreducible and contains $\O_\alpha$; therefore it is equal (set-theoretically)
to $\O_\alpha$, i.e., $\O_\alpha$ is $P_i$-invariant. Lemma~\ref{lem:bs} implies
\begin{equation}\label{excha}
\der_i ((h+z_{i+1}-z_i)J_\alpha)=0
\,.
\end{equation}
\item If $\O_\alpha\not\subset \n_i$, we have $\mdeg_\n (\O_\alpha\cap\n_i)= (h+z_i-z_{i+1})J_\alpha$ by property (3b) of multidegrees, and then by applying Lemma~\ref{lem:bs} to each irreducible component of $\O_\alpha\cap\n_i$ we find:
\begin{equation}\label{exchb}
-(h+z_i-z_{i+1}) \der_i J_\alpha = \sum_{\beta\ne\alpha} m_{i;\alpha,\beta} J_{\beta}
\,.
\end{equation}
\end{enumerate}
Adding the diagonal term to the sum,
Eq.~\eqref{exchb} can be rewritten under the equivalent form
\begin{equation}\label{exch}
\hat s_i J_\alpha =\sum_{\beta} m_{i;\alpha,\beta} J_{\beta}
\end{equation}
where we used the $S_n$-action $\hat s_i=\tau_i-h\der_i$ of Eq.~\eqref{Snh}.
Note now that Eq.~\eqref{excha} is a special case of Eq.~\eqref{exch}
where $m_{i;\alpha,\beta}=-\delta_{\alpha,\beta}$. So Eq.~\eqref{exch} is valid
in all cases. At $h=0$, which is the case that Joseph considered in \cite{Jos-orbvar},
$\hat s_i$ reduces to the action of $S_n$ on $\mathbb{C}[z_1,\ldots,z_n]$
by permutation of variables.

\subsection{Identification with $I_\lambda$}
There is a natural object in $W_\la\otimes\mathbb{C}[z_1,\ldots,z_n,h]$,
namely the map $J_\lambda: \alpha\in\SYT\mapsto J_\alpha$. Combining Eqs.~(\ref{exchdual}) and (\ref{exch}),
we find:
\begin{equation}\label{exchbis}
\rho^{(i,i+1)} J_\la = -\hat s_i J_\la
\,.
\end{equation}
According to Theorem \ref{thm:hotta}, $W_\lambda$ carries the structure
of $S_n$-module which is the same, by Schur--Weyl duality,
as that of the space of singular vectors in $V[\lambda]$
(where $S_n$ acts by permutation $P^{(i,i+1)}$ of tensors).
Tensoring with $\mathbb{C}[z_1,\ldots,z_n,h]$ (on which we do not make
$S_n$ act), we obtain an $S_n$-intertwiner $\phi:
W_\la\otimes\mathbb{C}[z_1,\ldots,z_n,h]
\to V[\la]_{sing}\otimes\mathbb{C}[z_1,\ldots,z_n,h]$.
The equation above becomes
\[
P^{(i,i+1)}\phi(J_\la)=-\hat s_i \phi(J_\la)
\]
which means $\phi(J_\la)$ satisfies the hypothesis of Lemma~\ref{fL}.
Note that $\phi$ is only defined up to a non-zero multiplicative constant.

We now want to identify $\phi(J_\la)$ with $I_\la$ by using Lemma~\ref{lem:deg}.
By definition the entries of $\phi(J_\la)$ are linear combinations
of those of $J_\la$ and therefore are homogeneous polynomials of
degree $k(\lambda)$ in the variables $z_1,\ldots,z_n,h$.
We have just derived the skew-symmetry of $\phi(J_\la)$ from Lemma~\ref{fL}.
Therefore, we have proved:
\begin{thm}\label{thm:ident}
Let $\lambda$ be a partition of $n$.
Then there exists a unique intertwiner $\phi$ such that
\[
\phi(J_\la)=I_\la
\,.
\]
\end{thm}
In particular, all properties that we shall prove for $I_\la$
are true for $J_\la$ as well.

\begin{example}
By comparing the formulae for $I_\la$ and $J_\la$, we find
\begin{alignat*}{3}
\lambda&=(1,1):&\qquad\phi(e_{\taba})&=v_{12}-v_{21}
\,.
\\
\lambda&=(2,1):&\qquad\phi(e_{\tabb})&=v_{112}-v_{121}
\,,
&\quad
\phi(e_{\tabc})&=v_{211}-v_{121}
\,.
\\
\lambda&=(2,2):&\qquad\phi(e_{\tabd})&=v_{1122}+v_{2211}-v_{1212}-v_{2121}
\,,
&\quad
\phi(e_{\tabe})&=v_{1221}+v_{2112}-v_{1212}-v_{2121}
\,.
\end{alignat*}
\end{example}

We next investigate in more detail two special cases for which
everything can be worked out explicitly; the reader is invited
to check all the results on our running examples, which belong to both.

\newcommand\A{h}

\subsection{Case of two rows}
In this section, we assume that the partition $\lambda$ has only two rows:
$\lambda=(n-p,p)$. This case was investigated in detail
in the paper \cite{artic39}, so that we shall omit proofs of the
results that were already contained in it.

\subsubsection{Link patterns} \label{sec:lp}
Call {\em link pattern} (or non-crossing matching) an unordered
collection of disjoint pairs of $\{1,\ldots,n\}$ such that
if $\{1,\ldots,n\}$ are represented as ordered vertices on a line,
then the two elements of each pair 
can simultaneously be connected in the upper half plane in a non-crossing
fashion (we sometimes call these connecting lines {\em arches})
and unpaired elements can be connected to upwards infinity (i.e.,
they must be outside all arches);
cf Fig.~\ref{fig:lp} left.
\begin{figure}
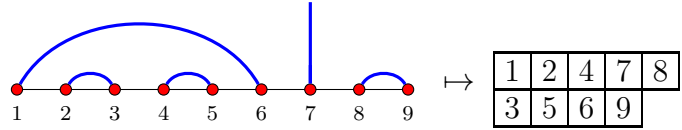

\[
\linkpattern[numbered]{1/6,2/3,4/5,7/7,8/9}
\ \mapsto\ 
\medboxes
\tableau{1&2&4&7&8\\3&5&6&9}
\]
\caption{From link patterns to standard Young tableaux.}\label{fig:lp}
\end{figure}

There is a simple bijection from link patterns with $p$ arches
to standard Young tableaux of shape $\lambda=(n-p,p)$, obtained
by recording in the first row the locations of openings of arches
and of empty spots, and on the second row the locations of
closings of arches. see Fig.~\ref{fig:lp}.

Given $\alpha\in\SYT$, we can therefore consider its associated link pattern,
and in particular, we shall use the following notation:
if $i$ and $j$ are paired in the link pattern, write $\alpha(i)=j$, $\alpha(j)=i$;
if $i$ is unpaired, write $\alpha(i)=\varnothing$.

There is an action of the {\em Temperley--Lieb algebra} on the
space of linear combinations of link patterns, which we identify
with $W_\lambda$ by identifying $e_\alpha$ with the corresponding link pattern.
It is defined graphically by
the action of the generators $\tl$, $i=1,\ldots,n-1$, 
of the Temperley--Lieb algebra which
corresponds to reconnecting vertices $i$ and $i+1$, e.g.,
\begin{align*}
\tl[1,2]\,\linkpattern[numbered]{1/1,2/2,3/4}&=
\tanglelinkpattern[numbered]{1/2,1'/2',3/3',4/4'}{1/1,2/2,3/4}=0
\\
\tl[1,2]\,\linkpattern[numbered]{1/1,2/3,4/4}&=
\tanglelinkpattern[numbered]{1/2,1'/2',3/3',4/4'}{1/1,2/3,4/4}=
\linkpattern[numbered]{1/2,3/3,4/4}
\\
\tl[1,2]\,\linkpattern[numbered]{1/4,2/3}
&=
\tanglelinkpattern[numbered]{1/2,1'/2',3/3',4/4'}{1/4,2/3}=\linkpattern[numbered]{1/2,3/4}
\\
\tl[1,2]\,\linkpattern[numbered]{1/2,3/4}
&=
\tanglelinkpattern[numbered]{1/2,1'/2',3/3',4/4'}{1/2,3/4}=2\linkpattern[numbered]{1/2,3/4}
\end{align*}
with the additional rules that reconnecting two unpaired points produces zero and that closed loops must be erased at a cost of multiplication
by $2$.

The Temperley--Lieb algebra (at loop weight $2$) is a quotient
of the symmetric group algebra, and in fact
it is shown in \cite[Sect.~5.1.2]{artic39} that if one sets
$\rho^{(i,i+1)}=1-\tl$, then the action defined above coincides
with the Joseph representation defined in Section~\ref{sec:hotta}.

\subsubsection{Description of orbital varieties}
When $\lambda$ has only two rows, $\O_\la$ is a ``spherical variety'',
i.e., the group of invertible upper triangular matrices
$B$ acts on it by conjugation
with a {\em finite}\/ number of orbits. This gives us a
first description of its irreducible components as $B$-orbit closures.
Define $\alpha_<$ to be the upper triangular matrix with values in $\{0,1\}$
such that $(\alpha_<)_{ij}=1$ iff $j=\alpha(i)>i$. Then it is easy to show
that $\O_\alpha=\overline{B\cdot \alpha_<}$, where $\cdot$ denotes conjugation
action.

An alternative description is in terms of equations:
\begin{prop}\label{prop:tworoweq}
$\O_\alpha$ is defined by the following equations:
\begin{enumerate}
\item $x^2=0$. 
\item The rank of any lower-left submatrix of $x$ is lower or equal
to the rank of the same submatrix of $\alpha_<$.
\end{enumerate}
\end{prop}
This statement can be extracted with some effort
from \cite{Meln2} or can be deduced
directly from the results of \cite{artic39}.

\subsubsection{Exchange relation}\label{sec:dicho}
We rewrite explicitly the dichotomy of the Hotta construction (Sections \ref{sec:hotta} and \ref{sec:divdiff}) since it will be needed in the next section.

Consider $\alpha\in\SYT$ and its associated link pattern.
The two cases are:
\begin{enumerate}
\item Either $\alpha(i)\ne i+1$
(there is no arch connecting $i$ and $i+1$ in the link pattern),
which means $(\alpha_<)_{i,i+1}=0$ and according to Proposition \ref{prop:tworoweq}
(2), among the equations of $\O_\alpha$ there is $x_{i,i+1}=0$, i.e.,
$\O_\alpha\subset \n_i$, and Eq.~\eqref{excha} holds,
or equivalently, $J_\alpha$ is $h+z_i-z_{i+1}$ times a symmetric polynomial
in $z_i,z_{i+1}$.
\item Or $\alpha(i)=i+1$
(there is an arch connecting $i$ and $i+1$ in the link pattern),
in which case $(\alpha_<)_{i,i+1}=1$, which implies $\O_\alpha\not\subset\n_i$.
Then we can rewrite Eq.~\eqref{exchb} as:
\[
-(h+z_i-z_{i+1}) \der_i J_\alpha = \sum_{\beta: \tl\beta=\alpha} J_{\beta}
\]
where, to keep notations simple, we have identified standard Young tableaux
and link patterns when we write ``$\tl\beta=\alpha$''.
\end{enumerate}

\subsubsection{Change of basis}
Finally, we investigate the intertwiner $\phi$.
In fact, in the case of two-row diagrams, there is a well-known
explicit formula for $\phi$, which was rediscovered
many times and dates back (at least in the special
case $p=n/2$) to \cite{RTW}
(see \cite{artic52} for more references and background). 
Roughly speaking, a pairing between $i$ and $j$ corresponds to
a $\frak{sl}(2)$ singlet $v_1^{(i)}\otimes v_2^{(j)}-v_2^{(i)}\otimes v_1^{(j)}$, 
whereas
an unpaired $i$ is a $v_1^{(i)}$ 
(where superscripts are as usual locations in the tensor
product $(\mathbb{C}^2)^{\otimes n}$).
With the present
sign conventions, the exact statement is:
\begin{prop}\label{prop:intertworow}
The intertwiner $\phi$ is given by
\[
\phi(e_\alpha)=\sum_{\substack{L=(l_1,\ldots,l_n)\in \{1,2\}^n\\l_i\ne l_j\text{ if }
j=\alpha(i)\\l_i=1\text{ if }\alpha(i)=\varnothing}} (-1)^{\lfloor \frac{n-p}{2}\rfloor+\#\{i\text{ even: }l_i=1\}}
v_L
\,.
\]
\end{prop}
\begin{proof}
Checking that the $\phi$ thus defined intertwines
the actions of the symmetric group is a routine exercise. 
Because of the conditions on the conditions on the multi-index $L$,
there are $k$ $2$'s and $n-k$ $1$'s, so $\phi(e_\alpha)\in V_\la$.
So the only
issue is normalization of $\phi$, 
which is fixed by Theorem~\ref{thm:ident}.
Consider the multi-index $L_0=(\underbrace{1,\ldots,1}_{n-p},\underbrace{2,\ldots,2}_p)$. We have
\[
I_{L_0}=D_0=\prod_{1\le i<j\le n-p}(h+z_i-z_j)\prod_{n-p+1\le i<j\le n} (h+z_i-z_j)
\,.
\]
On the other hand, by inspection the entry $\phi(e_\alpha)_{L_0}$ is zero
unless $\alpha$ is the tableau $\setlength{\cellsize}{18pt}\def\boxformat{\scriptstyle}
\alpha_0=\tableau{1&&\cdots&&n-p\\\vbox{\hbox{$\scriptstyle n-p$}\hbox{$\scriptstyle+1$}}&\cdots&n}$,
corresponding to the link pattern 
$\begin{tikzpicture}[x=\linkpatternunit,y=\linkpatternunit,baseline=0]%
\linkpattern[tikzstarted]{1/1,2/2,3/8,4/7,5/6}
\draw[decoration={brace,amplitude=7},decorate,thick]
($(v5)+(0.2,-0.1)$) -- ($(v3)+(-0.2,-0.1)$); \node[below=0.4 of v4]{$p$};
\end{tikzpicture}$. $J_{\alpha_0}$ is the multidegree of the
orbital variety indexed by $\alpha_0$, 
which according to Proposition \ref{prop:tworoweq}
is a linear subspace of the form
\newcommand{\tikzmark}[1]{\tikz[overlay,remember picture] \coordinate (#1);}
\[
\O_{\alpha_0}=\left\{\begin{pmatrix}
\cdot&0&\cdots&0&\tikzmark{nw}\star&\cdots&\cdots&\star\tikzmark{ne}\\
\cdot&\cdot&\ddots&\vdots&\vdots&&&\vdots\\
\cdot&\cdot&\cdot&0&\vdots&&&\vdots\\
\cdot&\cdot&\cdot&\cdot&\star&\cdots&\cdots&\star\tikzmark{se}\\
\cdot&\cdot&\cdot&\cdot&\cdot&0&\cdots&0\\
\cdot&\cdot&\cdot&\cdot&\cdot&\cdot&\ddots&\vdots\\
\cdot&\cdot&\cdot&\cdot&\cdot&\cdot&\cdot&0\\
\cdot&\cdot&\cdot&\cdot&\cdot&\cdot&\cdot&\cdot\\
\end{pmatrix}\qquad\quad
\begin{tikzpicture}[overlay,remember picture] 
\draw[decoration={brace,amplitude=7},decorate,thick] 
($(nw)+(0,0.2)$) -- ($(ne)+(0,0.2)$);
\node at ($(nw)!0.5!(ne)+(0,0.7)$) {$p$};
\draw[decoration={brace,amplitude=6},decorate,thick] 
($(ne)+(0.3,0.2)$) -- ($(se)+(0.3,0)$);
\node at ($(ne)!0.5!(se)+(1.2,0.1)$) {$n-p$};
\end{tikzpicture}
\right\}
\]
so that $J_{\alpha_0}=I_{L_0}$. We conclude that the normalization
of $\phi$ is fixed by $\phi(e_{\alpha_0})_{L_0}=1$.
This fits with the formula of the proposition.
\end{proof}

\subsubsection{Cyclicity}\label{sec:cycl}
Consider the special case $n=2p$,
i.e., the Young diagram is rectangular, and the corresponding
link patterns have no unpaired vertices. One can then define a rotation
of link patterns in the natural way, i.e., move the vertices
cyclically $1\to2\to\cdots\to n\to1$ keeping the pairings intact.
Via the one-to-one correspondence from link pattens to $\SYT$,
this defines a bijection $\rho$ from $\SYT$ to $\SYT$.
One observes empirically the following relation:
\begin{equation}\label{eq:tworowcyc}
J_{\rho\alpha}(z_1,\ldots,z_n)=(-1)^{p-1}J_\alpha(z_2,\ldots,z_n,z_1-3h)
\end{equation}
It is well-known that if Eq.~\eqref{exch} is satisfied, then
Eq.~\ref{eq:tworowcyc} is equivalent to the \qKZ/ equation \eqref{eq:qkz}
(in which $N=2$).
Indeed we shall prove in Section \ref{sec:proofs}
that the case $\lambda=(p,p)$ is among the
cases where $I_\lambda$ and therefore
$J_\lambda$ satisfy the \qKZ/ equation.

The case $\lambda=(p,p)$ will be considered again in Section~\ref{sec:Selberg}.

\subsection{Case of two columns}
Let $\lambda=(\underbrace{2,\ldots,2}_p\underbrace{1,\ldots,1}_{N-p})$,
$n=N+p$. In this section we show that 
all orbital varieties $\O_\al$ for such $\la$ are complete intersections and
we describe explicitly their equations as well as the extended
Joseph polynomials $J_\al$.

Note that the codimension of $\O_\la$ is simply $k(\lambda)=p$.
Furthermore, $\lambda$ satisfies the hypothesis of Proposition \ref{prop:orbrect}, so that
\[
\O_\la=\{ x\enskip n\times n: x^N=0\}\qquad n=N+p,\quad p\le N
\,.
\]

There is a general duality of orbital varieties which corresponds
to conjugation of partitions and standard Young tableaux 
(related to the duality of \cite[Chapter 3]{Spal}).
It means that the cases of two rows and two columns are dual to each other.

\newcommand\SYTp{\mathrm{SYT}(\lambda')}
\subsubsection{The dual symmetric group action}
As mentioned above, there is a bijection $'$ from $\SYT$ to
$\SYTp$ which is just conjugation (reflection through the diagonal)
of Young tableaux. Using this bijection
we shall define a new ``dual'' action on $W_\la$ starting from that
on $W_{\la'}$ as defined in Section~\ref{sec:lp}. Recall
that the action is defined by the $m_{i;\alpha,\beta}$, cf~Eq.~\eqref{exchdual}.

Given $\alpha\in\SYT$, define the sign of $\alpha$ to be
\begin{equation}\label{eq:sign}
\epsilon_\alpha=(-1)^{\textstyle \#\{i<j:\text{ $j$ strictly south-west of $i$ in $\alpha$}\}}
\,.
\end{equation}

Now given $\alpha,\beta\in\SYT$ and their conjugate $\alpha',\beta'$, define
\begin{equation}\label{eq:dualaction}
m_{i;\alpha,\beta}=-\epsilon_\alpha\epsilon_\beta
m_{i;\beta',\alpha'}
\,.
\end{equation}
Due to the easy lemma the $m_{i;\alpha,\beta}\ne0$, $\alpha\ne\beta$, implies $\epsilon_\alpha\ne\epsilon_\beta$, we can write equivalently
\[
m_{i;\alpha,\beta}=(-1)^{\delta_{\alpha,\beta}} 
m_{i;\beta',\alpha'}
\]
i.e., negate the diagonal entries and transpose.


\subsubsection{Defining equations of the orbital varieties}
There is again a bijection between $\SYT$ and the
set of link patterns in size $n$
with $p$ arches, obtained by composing the bijection of 
Section~\ref{sec:lp} with conjugation of Young tableaux. 
So we shall use the same notations $\alpha(i)=j$, $\alpha(i)=\varnothing$,
for paired $i,j$ and unpaired $i$, respectively.

For $\alpha\in\SYT$ and $1\le i<j\le 2n$, denote 
\[
p_\alpha(i,j):=j-i+1-\#\{k:\ i\le k<\alpha(k)\le j\}
\,.
\]
We have $p_\alpha(i,j)\ge\frac{j-i+1}{2}$, with equality
if and only if all elements of $[i,j]$ are paired between
themselves; in particular,
\[
p_\alpha(i,j)=\frac{j-i+1}{2}\qquad i<j=\alpha(i)
\,.
\]
An important property is that if $i\le i'\le j'\le j$, then
$p_\alpha(i',j')\le p_\alpha(i,j)$ (enlarging an interval by one can either
leave $p_\alpha$ unchanged if a new pairing has been absorbed in the interval,
or increase $p_\alpha$ by 1 otherwise).
This implies that $p_\alpha(i,j)\le p_\alpha(1,n)=n-p=N$ for all $i<j$.

Define 
\begin{equation}\label{defeq}
\hat\O_\alpha:=\left\{x\in \n:
\left(x^{p_\alpha(i,j)}\right)_{i,j}=0,\ i<j=\alpha(i)
\right\}
\,.
\end{equation}
We want to show that $\hat\O_\alpha=\O_\alpha$.
First we prove the following lemma:
\begin{lem}\label{lem:poweq}
If $x\in\hat\O_\alpha$
then
$\left(x^a\right)_{i,j}=0$
for all $i<j$ and $a\ge p_\alpha(i,j)$.
\end{lem}
In fact, these are all the
$\left(x^a\right)_{i,j}$ in the ideal of equations of $\hat\O_\alpha$.
\begin{proof}
By induction on $j-i$. 

If $j=i+1$ then either $a>1$ in which case $(x^a)_{i,i+1}=0$ because
$x$ is strict upper triangular; or $a=1=p_\alpha(i,i+1)$ which implies
$i+1=\alpha(i)$, in which case $x_{i,i+1}=0$ is part of the defining equations
of $\hat\O_\alpha$.

Next, assume $j>i+1$. We are going to divide into cases depending on the
position of $\alpha(i)$ (and similarly for $\alpha(j)$).
If $\alpha(i)=j$ and $a=p_\alpha(i,j)$, once again
$\left(x^a\right)_{i,j}=0$ is part of the defining equations
of $\hat\O_\alpha$.
If $\alpha(i)\not\in [i,j]$ or $\alpha(i)=j$ and $a>p_\alpha(i,j)$, consider
\[
\left(x^a\right)_{i,j}=
\sum_{i<k<j} x_{i,k}\left(x^{a-1}\right)_{k,j}
\,.
\]
We claim that every term in the sum is zero. Indeed if $\alpha(i)\not\in
[i,j]$, $p_\alpha(i+1,j)=p_\alpha(i,j)-1$ so that $a-1\ge p_\alpha(i,j)-1= p_\alpha(i+1,j)\ge p_\alpha(k,j)$ and we apply the induction to $\left(x^{a-1}\right)_{k,j}$.
Similarly if $\alpha(i)=j$ and $a>p_\alpha(i,j)$, $a-1\ge p_\alpha(i,j)=p_\alpha(i+1,j)\ge p_\alpha(k,j)$.

So we can assume in what follows that $\alpha(i)\in ]i,j[$. The exact same
reasoning applied to $j$ allows to conclude that $\alpha(j)\in ]i,j[$, so that
$i<\alpha(i)<\alpha(j)<j$.

We now come to the crucial remark that 
\begin{align*}
p_\alpha(i,j)&=j-i+1-\#\{\text{pairings of $\alpha$ inside $[i,j]$}\}\\
&=j-\alpha(j)+1-\#\{\text{pairings of $\alpha$ inside $[\alpha(j),j]$}\}\\
&+\alpha(j)-\alpha(i)-1-\#\{\text{pairings of $\alpha$ inside $[\alpha(i)+1,\alpha(j)-1]$}\}\\
&+\alpha(i)-i+1-\#\{\text{pairings of $\alpha$ inside $[i,\alpha(i)]$}\}\\
&=p_\alpha(i,\alpha(i))+p_\alpha(\alpha(i)+1,\alpha(j)-1)+p_\alpha(\alpha(j),j)
\end{align*}
where in the last line, if $\alpha(i)+1=\alpha(j)$ then conventionally
$p_\alpha(\alpha(i)+1,\alpha(j)-1)=0$.
Indeed the configuration does not allow for mixed pairings
between the three intervals $[i,\alpha(i)]$, $[\alpha(i)+1,\alpha(j)-1]$, $[\alpha(j),j]$.
So we can write
\[
\left(x^a\right)_{i,j}=
\sum_{i<k<\ell<j} \left(x^{p_\alpha(i,\alpha(i))}\right)_{i,k}
\left(x^{a-p_\alpha(i,\alpha(i))-p_\alpha(\alpha(j),j)}\right)_{k,\ell}
\left(x^{p_\alpha(\alpha(j),j)}\right)_{\ell,j}
\,.
\]
The first factor is zero if $k\le \alpha(i)$ by the
induction hypothesis, noting that $p_\alpha(i,k)\le p_\alpha(i,\alpha(i))$
and similarly the third factor
is zero if $\ell\ge \alpha(j)$. If $\alpha(i)+1=\alpha(j)$ the proof
is finished; otherwise note that $p_\alpha(k,\ell)\le \alpha_\alpha(\alpha(i)+1,\alpha(j)-1)
\le a-p_\alpha(i,\alpha(i))-p_\alpha(\alpha(j),j)$ and the second factor vanishes
for the same reason.
\end{proof}

Taking $a=N\ge p_\alpha(i,j)$ in Lemma~\ref{lem:poweq},
we find that $\hat\O_\alpha \subset \O$ (set-theoretically).

Now observe that $\hat\O_\alpha$ is defined by $p$ equations, but $\O$
is equidimensional of codimension $p$, so that $\hat\O_\alpha$ is 
(a complete intersection) of pure
codimension $p$, and so is a union of irreducible components of $\O$.

One could with more effort conclude geometrically
that $\O_\alpha=\hat\O_\alpha$, but instead we shall use multidegrees and
the uniqueness property of Lemma~\ref{lem:deg}.
Define $\hat J_\alpha:=\mdeg_\n \hat\O_\alpha$. Since the $\hat\O_\alpha$ are complete
intersections, one can calculate directly
\begin{equation}\label{Jtwocol}
\hat J_\alpha=\prod_{i<j=\alpha(i)} \left(\frac{j-i+1}{2}\A+z_i-z_j\right)
\,.
\end{equation}
Note that an identical formula (really, at $h=0$, but this can
be absorbed in the shift of the $z$'s) appears in \cite{KL-KL}
in an indirectly related context.

We can then check
\begin{lem}
$\hat J_\la=\sum_\alpha \hat J_\alpha e_\alpha$ satisfies Eq.~\eqref{exchbis}.
\end{lem}
\begin{proof}
Recall that Eq.~\eqref{exchbis} amounts to saying that the entries
$\hat J_\alpha$ of $\hat J_\la$ must satisfy Eq.~\eqref{exch}.
Taking into
account that we use the {\em dual}\/ action defined above,
we find the same dichotomy as in Section~\ref{sec:dicho}, but inverted:
\begin{enumerate}
\item If $\alpha(i)=i+1$, then according to Section~\ref{sec:dicho}
case (2), $m_{i;\alpha',\alpha'}=+1$, so $m_{i;\alpha,\alpha}=-1$, i.e.
we are in case (1) of Section~\ref{sec:hotta};
so Eq.~\eqref{exch} can be rewritten as Eq.~\eqref{excha},
which is trivially satisfied by
$\hat J_\alpha= C (h+z_i-z_{i+1})$ where $C$ does not depend on $z_i,z_{i+1}$.
\item If $\alpha(i)\ne i+1$, then according to Section~\ref{sec:dicho}
case (2), $m_{i;\alpha',\alpha'}=-1$, so $m_{i;\alpha,\alpha}=+1$, i.e.
we are in case (2) of Section~\ref{sec:hotta};
so Eq.~\eqref{exch} can be rewritten as Eq.~\eqref{exchb}, or more explicitly,
\[
-(h+z_i-z_{i+1})\der_i \hat J_\alpha=
\begin{cases}
0&\alpha(i)=\alpha(i+1)=\varnothing\\
\hat J_{\tl\alpha}&\text{otherwise}
\end{cases}
\]
where again we have identified standard Young tableaux
and link patterns when we write ``$\tl\alpha$''.

This equation can also be checked directly case by case:
\begin{itemize}
\item If $\alpha(i)=\alpha(i+1)=\varnothing$, $\hat J_\alpha$ does not
depend on $z_i, z_{i+1}$, so $\der_i \hat J_\alpha=0$.
\item If $\alpha(i)=\varnothing$, $\alpha(i+1)=j>i+1$,
$\hat J_\alpha=C (\frac{j-i}{2}h+z_{i+1}-z_j)$, so $-(h+z_i-z_{i+1})\der_i \hat J_\alpha
= C(h+z_{i+1}-z_i)$ which is indeed $\hat J_{\tl\alpha}$ since $\tl\alpha$
differs from $\alpha$ only in pairing $i,i+1$ and having $j$ unpaired.
\item The case $\alpha(i)=j<i$, $\alpha(i+1)=\varnothing$ can be treated
similarly.
\item If $\alpha(i)=j<i$, $\alpha(i+1)=k>i$,
$\hat J_\alpha=C(\frac{i-j+1}{2}h+z_j-z_i)(\frac{k-i}{2}h+z_{i+1}-z_k)$, so
$-(h+z_i-z_{i+1})\der_i \hat J_\alpha=C(h+z_i-z_{i+1})(\frac{k-j+1}{2}h+z_j-z_k)$
which again coincides with $\hat J_{\tl\alpha}$, since $\tl\alpha$ pairs
$i,i+1$ and $j,k$.
\item The other two cases $i<i+1<\alpha(i+1)<\alpha(i)$ and
$\alpha(i+1)<\alpha(i)<i<i+1$ can be treated similarly.
\end{itemize}
\end{enumerate}
\end{proof}

Applying the intertwiner $\phi$ and then Theorem~\ref{thm:ident} 
and Lemma~\ref{lem:deg},
we conclude that the $\hat J_\alpha$ coincide up to normalization
with the multidegrees $J_\alpha$ of
the orbital varieties: $\hat J_\alpha=c J_\alpha$ for some $c\ne0$.

Now according to the above, $\hat\O_\alpha$ is a union of a certain certain
subset of $\O_\beta$, 
so we can write at the level of multidegrees
$\hat J_\alpha=\mdeg_\n \hat\O_\alpha = \sum_\beta k_{\alpha,\beta} J_\beta$
where $k_{\alpha,\beta}$ is the multiplicity of $\O_\beta$ in $\hat\O_\alpha$
(or zero if $\O_\beta\not\subset\hat\O_\alpha$). 
In order to conclude,
we only need to note that according to Eq.~\eqref{exch} and 
Theorem~\ref{thm:hotta}, the $J_\beta$, $\beta\in\SYT$, generate a subspace
of $\mathbb{C}[z_1,\ldots,z_n,h]$ which is
an irreducible representation of the symmetric
group under the action $\hat s_i$ 
(associated to the conjugate partition $\lambda'$, with our sign
convention), and so in particular are linearly independent.
So $c J_\alpha=\sum_\beta k_{\alpha,\beta} J_\beta$ implies 
$k_{\alpha,\beta}=c\delta_{\alpha,\beta}$ and
$\hat\O_\alpha=\O_\alpha$ (set-theoretically).
In other words, we have proved:
\begin{thm}\label{thm:defeqb}
$\O_\alpha$ is defined by the equations
\[
\left(x^{\frac{j-i+1}{2}}\right)_{i,j}=0,\qquad i<j=\alpha(i)
\,.
\]
\end{thm}
In fact,
since all coefficients of $\hat J_\alpha=k_{\alpha,\alpha}J_\alpha$ 
at $h=0$ are $\pm 1$ and 
$J_\alpha\in\mathbb{Z}[z_1,\ldots,z_n,h]$, we have $k_{\alpha,\alpha}=c=1$,
i.e., $\hat J_\la=J_\la$
(and $\hat\O_\alpha$ being a complete intersection,
the equations above define $\O_\alpha$ as a reduced scheme).
In particular,
\begin{equation}\label{Jtwocolb}
J_\alpha=\prod_{i<j=\alpha(i)} \left(\frac{j-i+1}{2}\A+z_i-z_j\right)
\,.
\end{equation}

\subsubsection{Change of basis}
We can use again duality (conjugation of partition and Young tableaux)
to find the intertwiner:
\begin{lem}
The intertwiner is:
\[
\phi(e_\alpha)=(-1)^{p(p-1)/2}
\,\epsilon_\alpha\sum_{\substack{L=(l_1,\ldots,l_n)\\{\scriptscriptstyle\text{ permutation of }
(1,\ldots,p,1,\ldots,n-p)}}} J_{\alpha'}|_{z_1^{l_1-1}\cdots z_n^{l_n-1}}
\,v_L
\]
where $|_{z_1^{l_1-1}\cdots z_n^{l_n-1}}$ denotes the given coefficient of
a polynomial.
\end{lem}
\begin{proof}
We need to check that this $\phi$ intertwines the symmetric group action.
Note that $\deg J_{\alpha'}=p(p-1)/2+(n-p)(n-p-1)/2$ 
so $z_1^{l_1}\cdots z_n^{l_n}$ exhausts
the degree and we might as well set $h=0$ in $J_{\alpha'}$.
\begin{align*}
P^{(i,i+1)}\phi(e_\beta)
&=
(-1)^{p(p-1)/2}\epsilon_\beta
\sum_L J_{\beta'}|_{z_1^{l_1-1}\cdots z_n^{l_n-1}}
\,v_{l_1,\ldots,l_{i+1},l_i,\ldots,l_n}
\\
&=
(-1)^{p(p-1)/2}\epsilon_\beta
\sum_L (\tau_i J_{\beta'})|_{z_1^{l_1-1}\cdots z_n^{l_n-1}}
\,v_L
\\
&=
(-1)^{p(p-1)/2}\epsilon_\beta
\sum_L \sum_{\alpha'}m_{i;\beta',\alpha'} J_{\alpha'}|_{z_1^{l_1-1}\cdots z_n^{l_n-1}}
\,v_L
&&\text{by Eq.~\eqref{exch} at $h=0$}
\\
&=\sum_{\alpha'}\epsilon_\beta\epsilon_\alpha
 m_{i;\beta',\alpha'} \phi(e_\alpha)
\\
&=-\sum_\alpha 
m_{i;\alpha,\beta} \phi(e_\alpha)&&\text{by Eq.~\eqref{eq:dualaction}}
\\
&=\phi(\rho^{(i,i+1)}e_\beta)&&\text{by Eq.~\eqref{exchdual}}
\end{align*}

Next we check the normalization, which is fixed by Theorem~\ref{thm:ident}. 
Consider the same tableau $\alpha_0$ that
was used in the proof of Proposition~\ref{prop:intertworow},
i.e.,
$\setlength{\cellsize}{18pt}\def\boxformat{\scriptstyle}
\alpha_0=\tableau{1&&\cdots&&n-p\\\vbox{\hbox{$\scriptstyle n-p$}\hbox{$\scriptstyle+1$}}&\cdots&n}$, which is a tableau of $\la'$.
Then we have as before
$J_{\alpha_0}=\prod_{1\le i<j\le n-p}(h+z_i-z_j)\prod_{n-p+1\le i<j\le n} (h+z_i-z_j)$,
so that $\phi(e_{\alpha'_0})=(-1)^{(n-p)(n-p-1)/2} A_{n-p}\otimes A_n$
where $A_k$ is the antisymmetrizer $\sum_{\sigma\in S_k}
(-1)^\sigma v_{\sigma(1)}\otimes\cdots\otimes v_{\sigma(k)}$.

Now consider the multi-index $L=(n-p,\ldots,1,1,\ldots,p)$. According
to Eq.~\eqref{Ihzero}, $I_L=(z_n-z_{n-2p+1})(z_{n-1}-z_{n-2p+2})\cdots (z_{n-p+1}-z_{n-p})$.
But this is also $J_{\alpha'_0}|_{h=0}$ according to Eq.~\eqref{Jtwocolb}.
Recalling that Theorem \ref{thm:hotta} implies that the
$J_\alpha|_{h=0}$ are linearly independent, we conclude that
the coefficient of $v_L$ in $\phi(e_{\alpha'_0})$ must be $1$, which is
consistent with the formula above.
\end{proof}

\subsubsection{Cyclicity}
In order to simplify the discussion, we assume now that $n=2p=2N$,
i.e., the Young diagram is rectangular,
and use again the rotation $\rho$ of Section~\ref{sec:cycl}
(in principle the content
of the present section is valid for $p<n/2$ but more work would be
needed to define $\rho$).

Then it is obvious from the explicit form \eqref{Jtwocolb} that
the $J_\alpha$ satisfy the extra ``cyclicity'' relation
\[
J_{\rho\alpha}(z_1,\ldots,z_n)=-J_\alpha(z_2,\ldots,z_n,z_1-(N+1)h)
\]
Indeed if $n$ is paired to say $i$, there is a factor
$\frac{n-i+1}{2}h+z_n-z_i$, but once rotated, the pairing $1,i+1$
produces a factor $\frac{i+1}{2}h+z_{i+1}-z_1$
which corresponds to the substitution $z_i\to z_{i+1}$, 
$z_n\to z_1-\frac{n+2}{2}h$ and
a change of sign. 

This together with Eq.~\eqref{exch} implies the \qKZ/ equation.

We shall see in next section the condition
on $\lambda$ for $I_\la$ (and therefore $J_\la$) to satisfy \qKZ/, a condition
which is satisfied if $\lambda$ has two columns.


\section{$I_\la$ is a \qi-conformal block,
$I_\la$ satisfies the \qKZ/ equations if $d(\la)\leq 1$.}
\label{sec:proofs}

\begin{thm}
\label{thm:I_is_qCB}
Let $\la\in\N^N$ be a partition of $n$. If $d(\la)>0$ then
$I_\la$ is a \qi-conformal block at level $d(\la)$. If
$d(\la)=0$ then $I_\la$ is a \qi-conformal block at level 1.
\end{thm}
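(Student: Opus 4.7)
The plan is to prove both defining properties of a \qi-conformal block---singularity and annihilation by a power of $e(z)$---by the same mechanism: $I_\la$ already sits at the minimal possible degree $k(\la)$ in $V[\la]$, and each raising operator of interest carries it into a weight space whose minimum degree is strictly larger, so the image must vanish.

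First I would sharpen Lemma~\ref{lem:deg}(i): its proof uses only that $\la\in\N^N$ is a composition with $\sum_i\la_i = n$, never that $\la_1\ge\cdots\ge\la_N$. So for any such composition $\mu$, every nonzero skew-symmetric $V[\mu]$-valued polynomial has degree at least $k(\mu)$. Next I would verify that both raising operators commute with the deformed $S_n$-action~\eqref{Sn}, hence preserve skew-symmetry. The global generators $E_{\ij}:=\sum_{a=1}^n e_{\ij}^{(a)}$ are symmetric in the tensor factors and act trivially on $z$, so they commute with \eqref{Sn} directly. The operator $e(z)$ is identified in Section~\ref{sec:qkz} with $h\bigl(T_{N,1}^{\{2\}}-T_{N,N}^{\{1\}}T_{N,1}^{\{1\}}\bigr)$ inside the Yangian action, so the lemma that the Yangian commutes with the $S_n$-action yields the same for $e(z)$.

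For singularity I would fix $i<j$, set $\mu=\la+e_i-e_j$ (the case $\la_j=0$ being trivial), and observe that $E_{\ij}I_\la$ is a skew-symmetric $V[\mu]$-valued polynomial of degree at most $k(\la)$. A short calculation gives
\[
k(\mu)-k(\la)\,=\,\la_i-\la_j+1\,\ge\,1,
\]
so the extended degree bound forces $E_{\ij}I_\la=0$. For the $e(z)$-annihilation, let $\mu^{(k)}=\la+ke_1-ke_N$; from the explicit formula $e(z)$ has total degree $1$ in $z,h$, so $e(z)^kI_\la$ is a skew-symmetric $V[\mu^{(k)}]$-valued polynomial of degree at most $k(\la)+k$. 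A direct computation yields
\[
k(\mu^{(k)})-k(\la)\,=\,k\bigl(d(\la)+k\bigr),
\]
so $e(z)^kI_\la=0$ whenever $d(\la)+k\ge 2$. Taking $k=\ell-d(\la)+1=1$ with $\ell=d(\la)\ge 1$ gives $d(\la)+k\ge 2$; taking $k=2$ with $\ell=1$ and $d(\la)=0$ gives $d(\la)+k=2$, exactly as required.

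The main obstacle is the skew-symmetry of $e(z)I_\la$, which is not apparent from the asymmetric-looking formula for $e(z)$; the Yangian identification provides the clean workaround, after which everything reduces to the elementary degree-shift computations above. Two minor bookkeeping points need attention: $\mu^{(k)}$ may fail to have non-negative entries (in which case $V[\mu^{(k)}]=0$ and the statement is trivial), and $\mu=\la+e_i-e_j$ need not be a partition---this is precisely why the composition-level strengthening of Lemma~\ref{lem:deg}(i) is needed.
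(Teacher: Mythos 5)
Your proposal is correct and takes essentially the same route as the paper: both arguments show that the raising operators $\sum_a e_{i,j}^{(a)}$ (for $i<j$) and $e(z)$ commute with the deformed $S_n$-action (the latter via the Yangian identification and Lemma~2.1), so the images of $I_\la$ are skew-symmetric, and then kill them by comparing their degrees with the minimal degree $k(\mu)$ of the target weight space via Lemma~\ref{lem:deg}. Your extension of Lemma~\ref{lem:deg}(i) to compositions and the uniform computation $k(\mu^{(k)})-k(\la)=k\bigl(d(\la)+k\bigr)$ simply make explicit the bookkeeping that the paper leaves implicit (its displayed cases $d(\la)+1$ and $4$ are the instances $k=1$ and $k=2$).
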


\begin{proof} First we prove that $I_\la$ is a singular
vector. Observe that
the $e_{\ij}$-image of a $V[\la]$-valued function
is a $V[\mu]$-valued function with $\mu_k=\la_k$
except $\mu_i=\la_i+1$ and $\mu_j=\la_j-1$.
The action of $e_{\ij}$ and the deformed action of $S_n$ commute,
hence $e_{\ij}I_\la$ is skew-symmetric. If $i<j$, then the degree
$k(\la)$ of $e_{i,j}I_\la$ is strictly less than $k(\mu)=k(\la)+\la_i-\la_j+1$.
Hence $e_{\ij}I_\la$ must be 0 by Lemma \ref{lem:deg}.

The operation $e(z)$ also commutes with the deformed action of $S_n$,
hence we can argue for the \qi-conformal block property similarly.

First let $d(\la)>0$. The function $e(z) I_\la$ is a
skew-symmetric $V[\mu]$-valued function of degree
$k(\la)+1$, where $\mu_k=\la_k$ except
$\mu_1=\la_1+1$, $\mu_N=\la_N-1$.
Calculation shows that $k(\mu)=k(\la)+d(\la)+1$
which is strictly greater than the degree $k(\la)+1$.
Hence $e(z) I_\la=0$ by Lemma \ref{lem:deg}.

Now let $d(\la)=0$. The function $e(z)^2 I_\la$
is a skew-symmetric $V[\mu]$-valued function of
degree $k(\la)+2$, where $\mu_k=\la_k$
except $\mu_1=\la_1+2$, $\mu_N=\la_N-2$.
Calculation shows that $k(\mu)=k(\la)+4$ which
is strictly greater than the degree $k(\la)+2$.
Hence $e(z)^2 I_\la=0$ by Lemma \ref{lem:deg}.
\end{proof}

\begin{cor} \label{cor:dim}
Let $d(\la)\le 1$. For generic $z\in \C^n$ the space of \qi-conformal blocks
at level 1 is one-dimensional.
\end{cor}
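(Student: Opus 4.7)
The plan is to combine Theorem~\ref{thm:I_is_qCB} with Lemma~\ref{lem:dim}: the former provides a nonzero element in the space of \qi-conformal blocks, and the latter gives the matching upper bound of $1$.

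First I would observe that by Theorem~\ref{thm:I_is_qCB} applied to a partition $\la$ with $d(\la)\le 1$, the polynomial $I_\la$ is a \qi-conformal block at level $1$ (both for $d(\la)=1$, where level $d(\la)=1$, and for $d(\la)=0$, where level $1$ is asserted directly). Since $I_\la$ is a nonzero polynomial in $z_1,\dots,z_n,h$ (its coefficient of $v_{L_0}$ equals $D_0$, which is a nonzero polynomial), there is a Zariski-dense open subset $U\subset \C^n$ such that for every $z\in U$ the specialization $I_\la(z,h)$ is a nonzero vector in $V[\la]$. Hence for $z\in U$ the space of level~$1$ \qi-conformal blocks is at least one-dimensional.

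Combining this lower bound with Lemma~\ref{lem:dim}, which provides the upper bound of $1$ on a (possibly different) Zariski-dense open set, and intersecting the two open sets, we conclude that on a Zariski-dense open subset of $\C^n$ the space of \qi-conformal blocks at level~$1$ is exactly one-dimensional, proving the corollary.

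I do not expect any real obstacle: the only subtlety is making sure that ``generic'' is interpreted consistently for the lower and upper bounds, but both conditions carve out Zariski-open sets, whose intersection is again Zariski-dense.
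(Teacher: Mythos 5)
Your proposal is correct and matches the paper's own proof: both combine Lemma~\ref{lem:dim} (upper bound of $1$ for generic $z$) with Theorem~\ref{thm:I_is_qCB} (the generically nonzero block $I_\la$ gives the lower bound). The extra care about intersecting the two Zariski-open sets is a harmless elaboration of the same argument.
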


\begin{proof}
We recalled in Lemma \ref{lem:dim} that for generic $z$ the dimension of \qi-conformal blocks is at most 1.
We proved in Theorem \ref{thm:I_is_qCB} that $I_\la$ is a (generically nonzero) \qi-conformal block. Hence, for generic $z$ this space is one-dimensional.
\end{proof}

Consider $z\in \C^n$ for which Corollary \ref{cor:dim} holds and for which the \qKZ/
operators have no singularities, e.g.~$z_1-z_2+h\not=0$ etc. Over the configuration space of these $z$'s one
may consider the bundle of singular vectors. The \qi-conformal blocks form a rank 1 subbundle.
It is proved in \cite{MV1} that the subbundle of \qi-conformal blocks is preserved by the
\qKZ/ connection. In our language this means the following theorem.

\begin{thm} \cite[Theorem 2]{MV1} \label{thm:qCB-qKZ}
If, for the $z$'s defined above, the space of \qi-conformal blocks at level 1 is spanned by a
$V[\la]$-valued function $I$, then a scalar function multiple of $I$ satisfies the \qKZ/ difference
equations~\eqref{eq:qkz}.
\end{thm}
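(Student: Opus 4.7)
The plan is to exploit the one-dimensionality of the space of \qi-conformal blocks established in Corollary~\ref{cor:dim} to reduce the vector-valued \qKZ/ difference system to a single scalar difference equation for a prefactor.

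The first step is to verify that each \qKZ/ operator $K_i(z)$ maps \qi-conformal blocks at the point $z$ to \qi-conformal blocks at the shifted point $z^{(i)}=(z_1,\ldots,z_i-(N+1)h,\ldots,z_n)$. Preservation of the singular-vector condition is immediate: each R-matrix factor in $K_i$ is a linear combination of the identity and a permutation $P^{(j,k)}$, and both commute with the diagonal $\gln$-action on $V$. Preservation of the annihilation condition $e(z)^{\ell-d(\la)+1}v=0$ is the heart of the matter; I would establish it via a shift-adapted intertwining
\be
e(z^{(i)})\,K_i(z)\,=\,K_i(z)\,e(z)
\ee
(as operators on the singular-vector subspace). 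Since $e(z)$ is the quadratic Yangian expression $h\bigl(T_{N,1}^{\{2\}}-T_{N,N}^{\{1\}}T_{N,1}^{\{1\}}\bigr)$ and $K_i(z)$ is a product of R-matrices realizing the Yangian coproduct on $V$, this intertwining follows from the defining $RTT$-relations \eqref{ijkl} after a careful tracking of the spectral-parameter shifts built into the definition of $K_i$.

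Granting preservation, the one-dimensionality of \qi-conformal blocks at both $z$ and $z^{(i)}$ forces
\be
K_i(z)\,I(z)\,=\,c_i(z)\,I(z^{(i)})
\ee
for some scalar rational function $c_i(z)$. Seeking a \qKZ/ solution in the form $\tilde I=\varphi\cdot I$, the system \eqref{eq:qkz} collapses to the scalar difference system $\varphi(z^{(i)})=c_i(z)\,\varphi(z)$ for $i=1,\ldots,n$.

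The main obstacle I expect is verifying the compatibility of this scalar system, namely the cocycle relation $c_i(z)\,c_j(z^{(i)})=c_j(z)\,c_i(z^{(j)})$. This however is forced by the commutativity of the shifted \qKZ/ operators, $K_j(z^{(i)})K_i(z)=K_i(z^{(j)})K_j(z)$, which is a standard consequence of the Yang--Baxter equation \eqref{YB} together with the unitarity $R^{(i,j)}(u)R^{(i,j)}(-u)=1$. Once compatibility is in hand, the scalar difference system admits an explicit meromorphic solution as a product of Gamma functions in the variables $z_i/((N+1)h)$, yielding the desired scalar multiple of $I$ satisfying \eqref{eq:qkz}.
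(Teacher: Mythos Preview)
The paper does not give its own proof of this statement: Theorem~\ref{thm:qCB-qKZ} is quoted from \cite[Theorem~2]{MV1}, and the surrounding text explicitly says ``It is proved in \cite{MV1} that the subbundle of \qi-conformal blocks is preserved by the \qKZ/ connection.'' So there is no in-paper argument to compare against; what you have written is an independent sketch of the argument from \cite{MV1}.

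Your outline is structurally sound and matches the standard route: once one knows that each $K_i(z)$ sends \qi-conformal blocks at $z$ to \qi-conformal blocks at $z^{(i)}$, one-dimensionality (which here is a hypothesis, not something you need Corollary~\ref{cor:dim} for) immediately gives scalar multipliers $c_i(z)$, the flatness of the \qKZ/ connection forces the cocycle relation, and a product of Gamma functions in $z_i/((N+1)h)$ solves the resulting scalar system. All of that is fine.

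The one place where your write-up is too casual is precisely the step that carries the actual content of the cited theorem: the intertwining $e(z^{(i)})\,K_i(z)=K_i(z)\,e(z)$ on singular vectors. Saying it ``follows from the defining $RTT$-relations after a careful tracking of the spectral-parameter shifts'' is a description of the work, not the work itself. The operator $e(z)$ is the specific Yangian combination $h\bigl(T_{N,1}^{\{2\}}-T_{N,N}^{\{1\}}T_{N,1}^{\{1\}}\bigr)$ in the evaluation module with parameters $z_1,\ldots,z_n$, while $K_i$ is not a Yangian element but a transfer-matrix-type product of $R$-matrices implementing the shift $z_i\mapsto z_i-(N+1)h$; showing that these are compatible, with exactly the level-$1$ shift $(N+1)h$, is the substance of \cite{MV1}. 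If you want a self-contained proof, that commutation is what you must actually establish.
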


\begin{thm} \label{thm:I_is_qKZ}
Let $d(\la)\le 1$.
Then $I_\la$ satisfies the \qKZ/ equations \eqref{eq:qkz}.
\end{thm}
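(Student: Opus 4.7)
My plan is to deduce this theorem by chaining together the three preceding results and then pinning down a scalar ambiguity. By Theorem \ref{thm:I_is_qCB}, $I_\la$ is a level-1 q-conformal block; by Corollary \ref{cor:dim}, for generic $z$ the space of such blocks is one-dimensional and hence spanned by $I_\la(z)$; and by Theorem \ref{thm:qCB-qKZ}, the qKZ connection preserves this rank-1 subbundle, so there exist scalar functions $c_i(z)$ with
\[
K_i(z)\,I_\la(z)\,=\,c_i(z)\,I_\la(\ldots, z_i - (N+1)h, \ldots),\qquad i=1,\ldots,n.
\]
The entire content of the remaining proof is to show $c_i(z) \equiv 1$.

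First I would argue $c_i$ is a constant by homogeneity. Each R-matrix $R^{(j,k)}(u) = (u - h P^{(j,k)})/(u+h)$ is homogeneous of degree $0$ in $z,h$, so $K_i(z)$ is of degree $0$; meanwhile $I_\la(z)$ and its shift $I_\la(\ldots,z_i-(N+1)h,\ldots)$ are both polynomials of homogeneous degree $k(\la)$ by Lemma \ref{lem:deg}. Hence $c_i(z)$ is a rational function of homogeneous degree $0$. Moreover $c_i$ has no poles: the potential pole divisors $z_i - z_j + (\text{multiple of }h)$ coming from the denominators of R-matrices in $K_i$ must cancel against factors of $I_\la(z)$ produced by skew-symmetry (cf.\ Lemmas \ref{fL} and \ref{lem:qdet}), since $I_\la(\ldots,z_i-(N+1)h,\ldots)$ is a polynomial and the equality above forces those factors in the numerator. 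A degree-0 rational function with no poles is a constant.

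To identify the constant as $1$, I would use a limiting argument. Sending $z_i \to \infty$ with the other $z_j$ and $h$ fixed, each factor $R^{(i,j)}(z_i - z_j + \cdots)$ tends to the identity, so $K_i(z) \to 1$. At the same time, the leading $z_i^{k(\la)}$-coefficient of $I_\la(\ldots, z_i - (N+1)h, \ldots)$ agrees with that of $I_\la(z)$, because shifting $z_i$ by a constant does not change the top-degree part. Comparing the two sides of the identity above in this limit forces $c_i = 1$.

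The main obstacle is the pole-cancellation step: verifying rigorously that $K_i(z) I_\la(z)$ is genuinely polynomial. The R-matrix denominators involve shifted combinations such as $z_i - z_k - Nh$ for $k<i$, which are not verbatim among the factors $z_a - z_b + h$ that divide the coefficient $D_0$ of $v_{L_0}$ in $I_\la$. Making this cancellation precise requires exploiting the full deformed $S_n$-skew-symmetry of $I_\la$ across all its $v_L$-coefficients, and this is the point at which careful bookkeeping (or an alternative argument using the explicit formula \eqref{eq:Ilambda}) will be needed.
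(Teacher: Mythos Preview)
Your outline is sound up to the point you yourself flag: the polynomiality of $K_i(z)I_\la(z)$ for general $i$. For $i>1$ the operator $K_i$ contains factors $R^{(i,k)}\bigl(z_i-z_k-(N+1)h\bigr)$ with $k<i$, whose denominators $z_i-z_k-Nh$ are not among the factors $z_a-z_b+h$ supplied by skew-symmetry via Lemmas~\ref{fL} and~\ref{lem:qdet}; there is no evident mechanism from skew-symmetry alone to remove them. This is a real gap, not just bookkeeping, and your proposal does not close it.

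The paper fills it with a two-step device you are missing. First it treats only $i=1$, where $K_1=R^{(1,n)}(z_1-z_n)\cdots R^{(1,2)}(z_1-z_2)$ involves \emph{no} shifted arguments. The key observation (Lemma~\ref{lem:skew}) is that skew-symmetry of $I$ gives $R^{(i,i+1)}(z_i-z_{i+1})I=-P^{(i,i+1)}I(z_i\leftrightarrow z_{i+1})$; iterating yields $K_1I_\la=(-1)^{n-1}P^{(1,n)}\cdots P^{(1,2)}I_\la(z_2,\ldots,z_n,z_1)$, which is manifestly polynomial (Lemma~\ref{lem:pol}). Combined with the fact that the components of $I_\la$ share no common polynomial factor---checked via the explicit $h=0$ formula~\eqref{Ihzero}---this forces $c_1$ to be constant; the value $1$ is then read off at $h=0$ (your $z_i\to\infty$ limit would also work here). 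Second, the paper shows by a direct R-matrix manipulation, again using Lemma~\ref{lem:skew}, that for a skew-symmetric function the $i$-th \qKZ/ equation implies the $(i{+}1)$-st, so the first equation propagates to all of them. One smaller point you also skip: even granting polynomiality of $K_iI_\la$, concluding that $c_i$ has no poles further requires that the \emph{shifted} $I_\la$ has no common component factor, which is exactly the no-common-factor claim above.
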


The rest of this section is the proof of this theorem.

\begin{proof}
The function $I_\la$ is a \qi-conformal block at level 1.
By Theorem \ref{thm:qCB-qKZ}, there is a scalar function
$f(z_1,\ldots,z_n)$ such that $fI_\la$
satisfies the \qKZ/ equations.
We obtain
\be
f(\ldots,z_i-(N+1)h,\ldots)\>I_\la(\ldots,z_i-(N+1)h,\ldots)\,=
\,K_i f I_\la,\qquad i=1,\ldots,n.
\ee
After rearrangement we have
\beq
\label{eq:arranged}
\frac{f(\ldots,z_i-(N+1)h,\ldots)}{f}
I_\la(\ldots,z_i-(N+1)h,\ldots)=K_iI_\la,\qquad i=1,\ldots,n.
\eeq
Since $I_\la(\ldots,z_i-(N+1)h,\ldots)$
and $K_i I_\la$ are rational functions of $z_1,\ldots,z_n$, the ratios
\be
g_i =\frac{f(\ldots,z_i-(N+1)h,\ldots)}{f},\qquad i=1,\dots,n,
\ee
are rational functions of $z_1,\ldots,z_n$ of degree 0.

Our first goal is to show that $g_1$ is a constant function equal to \,1\,.
We start with two lemmas.

\begin{lem} \label{lem:skew}
Let $I$ be a $V[\la]$-valued skew-symmetric function (for example $I=I_\la$).
Then
\be
R^{(\ii+1)}(z_i-z_{i+1})\>I\,=\,-\>P^{(\ii+1)}\>I(z_i\lrar z_{i+1})\,.
\ee
\end{lem}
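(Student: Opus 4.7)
The plan is to reduce the identity to a one-line algebraic consequence of the skew-symmetry condition $s_i I = -I$ combined with the explicit form of the $R$-matrix. Set \,$w := z_i - z_{i+1}$\,, \,$P := P^{(\ii+1)}$\,, and \,$\Tilde I := I(z_i \lrar z_{i+1})$\, for brevity, and recall $R^{(\ii+1)}(w) = (w - hP)/(w+h)$.

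First I would unwind $s_i I = -I$ using formula~\eqref{Sn}: after clearing the common denominator $w$, this collapses to the single clean relation
\be
(wP + h)\,\Tilde I \,=\, (h-w)\,I\,,
\ee
which is the only input from skew-symmetry. Next I would observe the operator identity $(wP + h)(h - wP) = h^2 - w^2$, a direct consequence of $P^2 = 1$, so that $wP + h$ is invertible as a rational operator and the above relation solves for $\Tilde I$ explicitly as a scalar multiple of $(h - wP)\,I$. Finally I would compute $-P\,\Tilde I$ from this expression: pushing $-P$ through the factor $(h - wP)$ via $P^2 = 1$ turns it into $w - hP$, and the scalar $(h-w)$ cancels against $h^2 - w^2 = (h-w)(h+w)$, leaving $\bigl((w - hP)/(w+h)\bigr)\,I = R^{(\ii+1)}(w)\,I$, as claimed.

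The main obstacle is really just bookkeeping: tracking where $P$ sits in products when it is commuted past itself or absorbed into scalars. Since $P$ is an involution commuting with every scalar, every manipulation takes place in $\C[w, h, P]/(P^2 - 1)$ and presents no conceptual difficulty. An equally clean alternative avoids any inversion: one can multiply the target identity $(w - hP)\,I = -(w+h)P\,\Tilde I$ through by $(wP + h)$ on the left and check, using only the boxed relation above, that both sides reduce to $(w^2 - h^2)\,P\,I$. This variant sidesteps any genericity assumption on $w$ and reads as a pure two-line verification.
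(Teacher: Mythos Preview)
Your proposal is correct and follows essentially the same approach as the paper: the paper's proof consists of the single sentence ``Skew symmetry with respect to the transposition $s_i$ implies the formula by direct calculation,'' and what you have written is precisely that direct calculation, carried out cleanly via the relation $(wP+h)\,\Tilde I=(h-w)\,I$ and the involution identity $P^2=1$.
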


\begin{proof}
Skew symmetry with respect to the transposition $s_i$ implies the formula by
direct calculation.
\end{proof}

\begin{lem}
\label{lem:pol}
$K_1 I_\la $ is a polynomial.
\end{lem}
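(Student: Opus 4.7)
The plan is to prove by induction on $j\in\{2,\ldots,n\}$ the closed-form identity
\[
R^{(1,j)}(z_1-z_j)\cdots R^{(1,2)}(z_1-z_2)\,I_\la
= (-1)^{j-1}\,P^{(1,2)}P^{(2,3)}\cdots P^{(j-1,j)}\,I_\la\bigl(z_2,\ldots,z_j,z_1,z_{j+1},\ldots,z_n\bigr).
\]
Specializing to $j=n$ then presents $K_1I_\la$ as a product of permutation operators applied to $I_\la$ evaluated at the cyclic shift of its arguments -- which is manifestly a polynomial, so that all the apparent denominators $\prod_{j=2}^{n}(z_1-z_j+h)$ must cancel.

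The base case $j=2$ is precisely Lemma~\ref{lem:skew}. For the inductive step, I would first commute the outermost factor $R^{(1,j)}(z_1-z_j)$ past the accumulated permutation using the identity
\[
R^{(1,j)}(u)\,P^{(1,2)}P^{(2,3)}\cdots P^{(j-2,j-1)} \,=\, P^{(1,2)}P^{(2,3)}\cdots P^{(j-2,j-1)}\,R^{(j-1,j)}(u),
\]
which follows by iterating the elementary conjugation $P^{(k,k+1)}R^{(k,j)}(u)P^{(k,k+1)}=R^{(k+1,j)}(u)$. What remains to be evaluated is $R^{(j-1,j)}(z_1-z_j)$ acting on $I_\la\bigl(z_2,\ldots,z_{j-1},z_1,z_j,z_{j+1},\ldots,z_n\bigr)$. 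The crucial observation is that in this permuted-argument vector the $(j-1)$-th and $j$-th slots hold $z_1$ and $z_j$ respectively, so the spectral parameter $z_1-z_j$ equals the difference of the $(j-1)$-th and $j$-th arguments. Lemma~\ref{lem:skew} therefore applies and converts the factor into $-P^{(j-1,j)}$ times $I_\la$ with those two slots swapped, completing the inductive step.

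The main obstacle is bookkeeping: one must track carefully that after each step the arguments of $I_\la$ sit in the slots expected by the next $R$-matrix, so that Lemma~\ref{lem:skew} can be invoked with the correct spectral parameter. Once this tracking is set up the induction runs mechanically. As a bonus, the closed formula at $j=n$ also gives $\deg(K_1I_\la) = k(\la)$, information likely useful for the remaining steps of the proof of Theorem~\ref{thm:I_is_qKZ}.
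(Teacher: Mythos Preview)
Your proposal is correct and follows exactly the paper's approach: the paper's proof is the single line ``By Lemma~\ref{lem:skew}, \;$K_1 I_\la = (-1)^{n-1}\,P^{(1,n)}\cdots P^{(1,2)}\,I_\la(z_2,\ldots,z_n,z_1)$'', and your induction on $j$ is precisely the detailed justification of that line (your product $P^{(1,2)}\cdots P^{(n-1,n)}$ is the same cyclic permutation as the paper's $P^{(1,n)}\cdots P^{(1,2)}$). Your commutation identity and the repeated use of Lemma~\ref{lem:skew} with the correct spectral parameter are exactly what is being tacitly invoked.
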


\begin{proof}
By Lemma~\ref{lem:skew},
\;$K_1 I_\la \>=\>
(-1)^{n-1}\>P^{(1,n)}\cdots P^{(1,2)}\,I_\la(z_2\lc z_n,z_1)$\,.
\end{proof}

We claim that the components of $I_\la$ do not have
a common polynomial factor of degree~$\ge 1$. Indeed,
if a polynomial, necessarily homogeneous, divides
all components of $I_\la$, then in the
quasiclassical limit $h=0$ a polynomial would
divide all components of $I_{\la}^{h=0}$. One sees
from the explicit form of $I_{\la}^{h=0}$ in the Remark in Section \ref{sec:Ila}
that this is not the case.

This claim together with Lemma \ref{lem:pol}
implies that the denominator of $g_1$ is a constant
function, and since $g_1$ is a rational function
of degree 0, $g_1$ must be a constant function.
The $h=0$ limit of equation \eqref{eq:arranged}
then implies that $g_1=1$. In other words, we proved that $I_\la$ satisfies the
first \qKZ/ equation.

\smallskip
Our next goal is to show that the first \qKZ/ equation
implies the others for skew-symmetric functions.

For brevity we will write $p$ for $(N+1)h$. Assume that the $i$-th \qKZ/ equation
\be
I(z_i\to z_i-p)=R^{(\ii-1)}(z_i-z_{i-1}-p)\cdots R^{(i,1)}(z_i-z_1-p) R^{(i,n)}(z_i-z_n)\cdots R^{(\ii+1)}(z_i-z_{i+1}) I
\ee
holds. At the right end of the formula we can use Lemma \ref{lem:skew} to obtain
\be
I(z_i\to z_i-p)=R^{(\ii-1)}(z_i-z_{i-1}-p)\cdots R^{(i,1)}(z_i-z_1-p) \times \qquad \qquad\qquad
\ee
\be
\qquad\qquad\qquad\qquad \times R^{(i,n)}(z_i-z_n)\cdots R^{(\ii+2)}(z_i-z_{i+2}) (-P^{(\ii+1)}I(z_i \lrar z_{i+1})).
\ee
Applying $-P^{(\ii+1)}$ to this equation, together with the iterated application of
\be
P^{(\ii+1)} R^{(i,m)}(u)=R^{(i+1,m)}(u)P^{(\ii+1)}
\ee
we get
\be
-P^{(\ii+1)}I(z_i\to z_i-p)=R^{(i+1,i-1)}(z_i-z_{i-1}-p)\cdots R^{(i+1,1)}(z_i-z_1-p) \times \qquad \qquad\qquad
\ee
\be
\qquad\qquad\qquad\qquad \times R^{(i+1,n)}(z_i-z_n)\cdots R^{(i+1,i+2)}(z_i-z_{i+2}) I(z_i \lrar z_{i+1}).
\ee
To this equation we substitute $z_i\lrar z_{i+1}$ and obtain
\begin{align*}
-P^{(\ii+1)}\,I(z_i\to z_{i+1}-p, z_{i+1}\to z_i)\,={}&\,
R^{(i+1,i-1)}(z_{i+1}-z_{i-1}-p)\cdots R^{(i+1,1)}(z_{i+1}-z_1-p) 
\\
&{}\times\,R^{(i+1,n)}(z_{i+1}-z_n)\cdots R^{(i+1,i+2)}(z_{i+1}-z_{i+2})\,I\;.
\end{align*}
We can use Lemma \ref{lem:skew} to write the left hand side in the form of
\be
R^{(\ii+1)}(z_i-z_{i+1}+p)I(z_{i+1}\to z_{i+1}-p).
\ee
Then applying the $R^{(\ii+1)}(z_{i+1}-z_i-p)$ operator to both sides
results in the $i+1$-st \qKZ/ equation. This finishes the proof of
Theorem~\ref{thm:I_is_qKZ}.
\end{proof}

\section{A \qi-Selberg type integral}
\label{sec:Selberg}

According to the general principle in \cite{MV3}, if a \KZ/-type equation
has a one-dimensional space of solutions, then the hypergeometric or
\qi-hypergeometric integrals representing the solutions can be calculated
explicitly, see demonstrations of that principle in
\cite{FStV, RSV, TV1, TV2, V, W1, W2}.
In this section we give another example of this type.

In the rest of the paper we fix $N=2$ and consider the bundle of
the $\glt$ \qi-conformal blocks at level 1 over $(\C^2)^{\otimes n}$.
That bundle is of rank 1.
The \qKZ/ operators define a discrete flat connection on that bundle. In
Section~\ref{sec:Ila} we constructed a vector-valued polynomial on $\C^n$ that
generates the space of flat sections of the discrete connection. Moreover,
in Section \ref{sec:geom} we identified that polynomial with the generating
function of extended Joseph polynomials of orbital varieties associated with
nilpotent $n\times n$-matrices. On the other hand in \cite{TV1, MV1, MV2}
flat sections of the same connection were constructed as multidimensional
\qi-hypergeometric integrals. In this section we identify the
\qi-hypergeometric flat sections constructed in \cite{TV1, MV1, MV2} with the
polynomial section constructed in Section \ref{sec:Ila} and obtain a
\qi-Selberg type identity that a multidimensional \qi-hypergeometric integral
equals a polynomial.

\subsection{Quantized conformal blocks at level 1 and \qKZ/ equations}

\label{sec:def}
\label{sec:connection}

First we recall some earlier definitions specialized for $\glt$, and with the substitution $h=1$.
We consider the
vector representation $\C^2$ of $\glt$ with the standard basis $v_1,v_2$ and denote $V=(\C^2)^{\otimes n}$.
The space $V$ has a basis
of vectors $v_{i_1}\otimes\dots\otimes v_{i_n}$, where
\;$i_j\in\{1,2\}$. Every such a sequence $(i_1,\dots,i_n)$ defines
a decomposition $L=(L_1,L_2)$ of $\{1,\dots,n\}$ into disjoint subsets,
\;$L_j=\{l\ |\ i_l=j\}$. The basis vector
$v_{i_1}\otimes\dots\otimes v_{i_n}$ is denoted by $v_L$.
We have $V=\bigoplus_{\la=(\la_1,\la_2)} V[\la],$
where $\la_1+\la_2=n$, and
$V[\la]=\{v\in V\ |\ e_{i,i}v=\la_i v , i=1,2\}$.
Denote by $\Il$ the set of all indices $L$ with $|L_j|=\la_j$, \;$j=1,2$.
The vectors $\{v_L\ |\ L\in\Il\}$ form a basis of $V[\la]$.

\medskip
For $z=(z_1,\dots,z_n)\in\C^n$, we define an operator
$e(z) : V\to V$, by the formula
\[
e(z) = \sum_{j=1}^n (z_j - e_{2,2}^{(j)} + \sum_{s=j+1}^n
(e_{1,1}-e_{2,2})^{(s)})e_{1,2}^{(j)}.
\]
For $\la=(\la_1,\la_2)$ with $1\geq \la_1-\la_2\geq 0$, we define the { space
of quantized conformal blocks at level 1} as
\[
CB_\la(z) =\{ v\in V[\la]\ | \ e_{1,2}v=0,\ {} e(z)^{2+\la_2-\la_1}v=0\,\}.
\]


Note that for any $n$, the space $V$ has a unique subspace $V[\la]$ with $1\geq \la_1-\la_2\geq 0$.
If $n=2\ell$, then $\la=(\ell,\ell)$ and if $n=2\ell+1$, then $\la=(\ell+1,\ell)$. That $\la$ will be called
the {\it middle weight}. The middle weight $\la$ is determined by $n$ and
we will denote the subspace $CB_\la(z)$ just by $CB(z)$.

Corollary \ref{cor:dim} claims that for generic $z\in\C^n$, dim $CB(z)=1$.



For $i=1,\dots,n$, the \qKZ/ operators at level 1 on $V$ are
\begin{multline*}
K_i(z_1,\dots,z_n)
=
R^{(i,i-1)}\left(z_i-z_{i-1}-3\right)\cdots
R^{(i,1)}\left(z_i-z_1-3\right)\times
\\
\times R^{(i,n)}(z_i-z_n)\cdots R^{(i,i+1)}(z_i-z_{i+1}).
\end{multline*}
The \qKZ/ operators define on $V$ a discrete flat connection,
\[
K_j(z_1,\dots,z_i-3,\dots,z_n) K_i(z_1,\dots,z_n)
=
K_i(z_1,\dots,z_j-3,\dots,z_n) K_j(z_1,\dots,z_n)
\]
for all $i,j$, see \cite{FR-qKZ}.
A $V$-valued function $I(z)$ is a {\em flat section} if
it satisfies the \qKZ/ equations,
\begin{equation}
\label{qkz}
\phantom{aaaa}
I(z_1,\ldots, z_i-3,\ldots,z_n)=K_i(z_1,\ldots,z_n)I(z_1,\ldots,z_i,\ldots,z_n),
\qquad i=1,\ldots,n.
\end{equation}
The subbundle of conformal blocks at level 1 is invariant with respect
to the \qKZ/ connection,
\[
K_i(z_1,\ldots,z_n) : CB(z_1,\ldots,z_n) \to CB(z_1,\ldots,z_i-3,\ldots,z_n)
\]
for all $i$, see \cite{MV1, MV2}.











\bigskip

Recall the polynomial $I_\la$ from Section \ref{sec:Ila}, with the substitution $h=1$.
In notation we will not indicate the $h=1$ substitution. Hence in the rest of the paper we have e.g.
\[
I_{(2,1)}=(z_1-z_2+1)v_{112}+(z_3-z_1-2)v_{121}+(z_2-z_3+1)v_{211}.
\]

Results of the first part of the paper, in our present conventions, are as follows.
\begin{itemize}
\item{} $I_\la$ is skew symmetric with respect to the $S_n$-action
\eqref{Sn} with $h=1$.
\item{} $I_\la$ has degree $k(\la)=\frac 12\la_1(\la_1-1)+\frac 12\la_2(\la_2-1)$ (the minimal degree skew symmetric polynomial).
\item{} Given $\la$, let $L=(L_1,L_2)$ be the partition of $\{1,\dots,n\}$ with
$L_1=\{1,\dots,\la_1\}$. Then $I_\la$ is normalized in such a way that its $L$-th coordinate is
\[
\prod_{1\leq a<b\leq\la_1}(z_a-z_b+1)
\prod_{\la_1< a<b\leq n}(z_a-z_b+1).
\]
That polynomial $I_\la$ will be called {\it minimal}.
\item{} If $\la=(\la_1,\la_2)$ is the middle weight, i.e. $1\geq \la_1-\la_2\geq 0$,
then $I_\la\in CB(z)$ and
$I_\la$ satisfies the \qKZ/ equations \eqref{qkz}.
\end{itemize}

\subsection{An integral representation for quantized conformal blocks
at level 1}
\label{sec:int repn}

In this section $\la$ is the middle weight for $n$, $\la=(\ell,\ell)$ if $n=2\ell$
and $\la=(\ell+1,\ell)$ if $n=2\ell+1$.


Define the {\it master function}
\begin{multline*}
\Phi(t_1,\dots,t_\ell, z_1,\dots,z_n) =
\prod_{n\geq j>i\geq 1}
\frac
{\Gamma((z_j-z_i+1)/3)}{\Gamma((z_j-z_i-1)/3)}
\prod_{\ell\geq j>i\geq 1}
\frac
{\Gamma((t_j-t_i+1)/3)}{\Gamma((t_j-t_i-1)/3)}
\times
\\
\times
\prod_{i=1}^n\prod_{j=1}^\ell
\frac
{\Gamma((z_i-t_j-1)/3)}{\Gamma((z_i-t_j)/3)} .
\end{multline*}

For $L=(L_1,L_2)\in \Il$ with $L_2=\{i_1<\dots <i_\ell\}$,
define the {function} $w_{L}(t_1,\dots,t_\ell, z_1,\dots,z_n)$ by the formula
\[
w_L = \sum_{\sigma\in S_\ell}
\prod_{j=1}^\ell \frac{1}{t_{\sigma_j}-z_{i_j}}
\prod_{m=1}^{i_j-1} \frac{t_{\sigma_j}-z_m+1}{t_{\sigma_j}-z_m}
\prod_{1\le i<j\le \ell, \ \sigma_i>\sigma_j}
\frac{t_{\sigma_i}-t_{\sigma_j}+1}{t_{\sigma_i}-t_{\sigma_j}-1}
\]
Define the $V[\la]$-valued {\it weight function} by the formula
\[
w(t_1,\dots,t_\ell,z_1,\dots,z_n) = \sum_{L\in\Il} w_L(t_1,\dots,t_\ell,z_1,\dots,z_n)v_L .
\]


Define the {\it trigonometric weight function}
$W(t_1,\dots,t_\ell, z_1,\dots,z_n)$ by the formula
\[
\label{W}
W = \pi^{\ell}\prod_{j=1}^{\ell}
\frac {\sin(\pi(z_{2j}-z_{2j-1}+1)/3)} {\sin(\pi(t_j-z_{2j-1})/3)\sin(\pi(t_j-z_{2j})/3)}
\prod_{m=1}^{2j-2}\frac{\sin(\pi(t_j-z_m+1)/3)}{\sin(\pi(t_j-z_m)/3)}.
\]

Using the formula $\Gamma(1-x) \; \Gamma(x) = \frac{\pi}{\sin{(\pi x)}}$, we can write
\begin{multline*}
\Phi W = \prod_{n\geq j>i\geq 1}
\frac
{\Gamma((z_j-z_i+1)/3)}{\Gamma((z_j-z_i-1)/3)}
\prod_{\ell\geq j>i\geq 1}
\frac
{\Gamma((t_j-t_i+1)/3)}{\Gamma((t_j-t_i-1)/3)}
\times
\\
\times \prod_{j=1}^\ell \prod_{m=0}^1\Gamma((z_{2j-m}-t_j-1)/3)\Gamma(1-(z_{2j-m}-t_j)/3)
\times
\\
\phantom{aaaaaa}
\times
\prod_{j=1}^\ell \prod_{i=1}^{2j-2}
\frac
{\Gamma(1-(z_i-t_j)/3)}{\Gamma(1-(z_i-t_j-1)/3)}
\prod_{i=2j+1}^n
\frac
{\Gamma((z_i-t_j-1)/3)}{\Gamma((z_i-t_j)/3)} \times
\\
\times \pi^{-\ell}\prod_{j=1}^\ell \sin(\pi(z_{2j}-z_{2j-1}+1)/3) .
\end{multline*}
The function $\Phi W w$ is a meromorphic function of $t_1,\dots,t_\ell$ with first order poles at the
hyperplanes
\begin{align*}
t_i - t_j &= 1 + 3 s ,
&&
i<j , \ {} \ s = 1,2,...
\\
t_j - z_m &= -1 + 3 s ,
&&
m \le 2j ,\ {} \ {} s = 0,1,...
\\
t_j - z_m &= - 3 s ,
&&
m \ge 2j-1 , \ {} \ s = 0,1,...
\end{align*}

Define an oriented unbounded one-chain $\mc C_n\subset \C$. It consists of
the vertical line $-\frac 12 + \sqrt{-1}\R$, oriented from
$-\frac 12 -\sqrt{-1}\infty$ to $-\frac 12+\sqrt{-1}\infty$,
and $2n$ circles $C_1, \dots, C_{2n}$ of radius $\frac 14$.
The circle $C_j$ for $1\le j\le n$ is centered at $j\sqrt{-1}$ and
oriented counterclockwise, while the circle $C_j$ for $n<j\le 2n$
is centered at $-1+j\sqrt{-1}$ and oriented clockwise.
Define the {\it integration cycle}
\[
\mc C_n^\ell = \{(t_1,\dots,t_\ell)\in \C^\ell\ | \ t_i\in \mc C_n\} .
\]


For $(z_1,\dots,z_n)\in \C^n$ such that $|z_j-j\sqrt{-1}|< \frac 14$,
define a $V[\la]$-valued {\it \qi-hypergeometric integral} by the formula
\[
\label{int}
\Psi_\la(z_1,\dots,z_n)=\int_{\mc C_n^\ell }
\Phi(t,z)w(t,z)W(t,z)\,dt_1\ldots dt_\ell .
\]

\begin{thm} [\cite{TV1}]
\label{thm analyt cont }
The function $\Psi_\la(z_1,\dots,z_n)$ is well-defined and extends to a meromorphic
function on $\C^n$. Moreover, the function $\Psi_\la(z_1,\dots,z_n)$ is a solution of the \qKZ/
equations \eqref{qkz}.
\end{thm}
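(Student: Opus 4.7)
The plan is to establish the three claims — well-definedness near a basepoint, meromorphic extension in $z$, and the $q$KZ equations — by the standard $q$-hypergeometric contour machinery. Throughout, the essential feature is that the integrand factorizes cleanly into a scalar master function times an $S_\ell$-symmetric trigonometric weight $W$ times the combinatorial weight function $w$, and each of these three pieces transforms in a controlled way under the $z_i\mapsto z_i-3$ shift.

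Step 1 (well-definedness): I would first check that for $z$ in the polydisc $|z_j-j\sqrt{-1}|<1/4$, each small circle $C_j$ of the cycle $\mathcal{C}_n$ encloses exactly one of the listed first-order poles of $\Phi W w$ in $t_k$, while the vertical line $-\tfrac12+\sqrt{-1}\R$ stays strictly in the gap between the two families of poles $t-z_m\in -1+3\Z_{\ge 0}$ and $t-z_m\in 3\Z_{\ge 0}$. Convergence along the vertical lines is then guaranteed by Stirling: the $\Gamma$-ratios in $\Phi$ grow polynomially in $|t|$, while the product of $\sin$-factors in $W$ in the denominator along $-\tfrac12+\sqrt{-1}\R$ produces exponential decay that dominates the polynomial growth of $w$.

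Step 2 (meromorphic extension): as $z$ varies in $\C^n$, moving poles in $t$ may hit the fixed cycle $\mathcal{C}_n^\ell$; I would extend $\Psi_\la$ by continuously deforming $\mathcal{C}_n^\ell$ to avoid these crossings (picking up residues from any poles that cross through), a procedure which is consistent because the pole hyperplanes are explicit and finite in number locally. The result is meromorphic on $\C^n$, with singularities only along the divisor where no global deformation exists.

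Step 3 ($q$KZ equations): this is the heart of the matter and the main obstacle. The scheme is to compare $\Psi_\la(\ldots,z_i-3,\ldots)$ with $K_i(z)\Psi_\la(z)$ under the integral sign. The master function satisfies
\[
\frac{\Phi(\ldots,z_i-3,\ldots)}{\Phi(z)}=\prod_{m\ne i}\varphi_{im}(z)\cdot\prod_{j=1}^\ell\psi_{ij}(t_j,z_i),
\]
where $\varphi_{im}$ and $\psi_{ij}$ are explicit rational/trigonometric factors obtained from $\Gamma(x+1)=x\Gamma(x)$; an analogous formula holds for $W$. The crucial algebraic input is the $R$-matrix intertwining identity for the weight function,
\[
K_i(z)\,w(t,z)\,=\,c_i(z)\,w(t,\ldots,z_i-3,\ldots)+\sum_{k}\bigl(g_k(t)-g_k(t;\text{shifted})\bigr),
\]
where $c_i(z)$ exactly cancels the scalar ratio in the $\Phi W$-transformation and the ``shifted'' terms integrate to zero on $\mathcal{C}_n^\ell$ because the cycle is invariant under $t_k\mapsto t_k-3$ up to residue contributions that telescope. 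This identity is essentially a direct calculation from the explicit $S_\ell$-sum definition of $w_L$ and the formula for $R$; once in hand, the $q$KZ equation follows by combining it with the transformation rules of $\Phi$ and $W$ and performing the compensating contour deformation.

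The main obstacle is the weight-function intertwining identity in Step 3, which is a delicate combinatorial computation in the symmetric group sum and uses the specific shift $3=(N{+}1)h$ at $N=2,h=1$. Steps 1 and 2 are analytic bookkeeping; once the intertwining identity is established, the $q$KZ equation is a mechanical consequence, as in \cite{TV1}.
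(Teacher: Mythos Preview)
The paper does not give its own proof of this theorem; it is stated with the attribution \cite{TV1} and used as a black box in the proof of Theorem~\ref{thm main}. So there is no in-paper argument to compare against.

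Your three-step outline --- convergence on the initial polydisc via Stirling and the exponential decay of the trigonometric weight, meromorphic continuation by contour deformation/residue tracking, and the \qKZ/ equation via an $R$-matrix intertwining identity for the weight function $w$ combined with the $\Gamma$-difference relations for $\Phi$ and $W$ --- is exactly the scheme of \cite{TV1}. The real content sits in your Step~3: the schematic identity you write, with unspecified $c_i(z)$ and telescoping terms $g_k$, is correct in spirit, but turning it into an actual proof requires the explicit cocycle identities for $w_L$ under the elementary $R$-matrices and a careful check that the shift $t_k\mapsto t_k-3$ leaves the cycle $\mathcal{C}_n^\ell$ invariant up to residues that cancel. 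None of this is supplied here, so as written your Step~3 is an outline rather than a proof; but since the paper itself defers entirely to \cite{TV1}, your sketch is at the same level of detail as what the paper provides.
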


\begin{thm} [\cite{MV1}]
\label{thm int CB}
For generic $z\in \C^n$, we have $\Psi_\la(z_1,\dots,z_n) \in CB(z)$.
\end{thm}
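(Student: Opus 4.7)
To show $\Psi_\la(z) \in CB(z)$ for generic $z$, I would verify the two defining conditions separately: the singular vector condition $e_{1,2}\Psi_\la = 0$, and the level-one condition $e(z)^{2+\la_2-\la_1}\Psi_\la = 0$. Both rely on the same general mechanism: bring the operator inside the integral and exhibit the resulting integrand as a total shift-difference in the $t_j$, which vanishes upon integration over the cycle $\mc C_n^\ell$.

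For the singular vector property, the goal is to establish a functional identity of the shape
$$\Phi(t,z)\,W(t,z)\,\bigl(e_{1,2}^{\mathrm{tot}}w\bigr)(t,z) \;=\; \sum_{j=1}^\ell \bigl(F_j(t_1,\dots,t_j+3,\dots,t_\ell,z) - F_j(t_1,\dots,t_\ell,z)\bigr),$$
with explicit meromorphic functions $F_j$. The shift by $3 = N+1$ is precisely what is absorbed by the quasi-periodicity of $\Phi W$ under $t_j \mapsto t_j + 3$, a direct consequence of $\Gamma(x+1) = x\Gamma(x)$. This identity can be verified by a direct, if intricate, computation on the symmetrized formula for $w_L$, using that $e_{1,2}^{(i)}$ converts a $2$ at position $i$ into a $1$. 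Once the identity is in place, integrating over $\mc C_n^\ell$ gives a difference of two integrals of $F_j$ over contours that differ by a shift of $3$ in the $j$-th coordinate, and this difference vanishes by contour deformation. For the level-one condition, the plan is analogous: an explicit shift-difference identity for $e(z)w$ (or $e(z)^2 w$ when $n$ is even) reduces the iterated action to boundary contributions that disappear after the same contour argument; the exponent $2+\la_2-\la_1$ is calibrated so that exactly this number of applications of $e(z)$ lands in a weight space forced to vanish by such an identity.

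The main obstacle is the combinatorial verification that the contour $\mc C_n^\ell$ can be translated by $3$ in any one coordinate without crossing a pole of the integrand. The first-order poles listed in the excerpt at $t_i - t_j = 1 + 3s$, $t_j - z_m = -1 + 3s$, and $t_j - z_m = -3s$ are spaced by exactly $3$ in $s$, and the cycle $\mc C_n$ (a vertical line plus small circles $C_k$) is positioned to separate these pole families in a strip of width $3$. Genericity of $z$ excludes accidental coincidences between poles. Carrying out this pole-separation check case by case, together with the direct verification of the shift-difference identities for $e_{1,2}^{\mathrm{tot}}w$ and $e(z)w$, constitutes the technical core of the proof.
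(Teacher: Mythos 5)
First, a point of comparison: the paper does not prove this statement at all --- Theorem~\ref{thm int CB} is imported verbatim from \cite{MV1} (as is Theorem~\ref{thm analyt cont } from \cite{TV1}), so there is no in-paper argument to measure your proposal against. Your overall mechanism --- push the operator under the integral sign, write $\Phi W\,(e\cdot w)$ as a sum of discrete ``total differences'' in the $t_j$ with step $3$, and use the quasi-periodicity of $\Phi$ under $t_j\mapsto t_j+3$ coming from $\Gamma(x+1)=x\Gamma(x)$ --- is indeed the standard mechanism by which such statements are established in the Tarasov--Varchenko/Mukhin--Varchenko framework, so the strategy is sound in spirit.

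As a proof, however, the proposal has two genuine gaps. First, the shift-difference identities for $e_{1,2}^{\mathrm{tot}}w$ and for $e(z)w$ (and for the iterate $e(z)^{2+\la_2-\la_1}$, which requires knowing that $e(z)\Psi_\la$ is again an integral of the same type with a modified weight function before the mechanism can be applied a second time) are only postulated; they are essentially the entire content of the theorem, and ``direct, if intricate, computation'' is not a proof, nor is the remark that the exponent is ``calibrated'' to make the iterate vanish. Second, the final step --- ``the difference of integrals over contours differing by a shift of $3$ vanishes by contour deformation'' --- is not correct as stated for this cycle. The poles of the integrand in $t_j$ form two families, one running to the right starting at $z_m-1$ ($m\le 2j$) and one running to the left starting at $z_m$ ($m\ge 2j-1$), each with spacing $3$; the cycle $\mc C_n$ (vertical line plus the $2n$ small circles) separates the two families, but its translate by $3$ is \emph{not} homologous to $\mc C_n$ in the complement of the pole hyperplanes, since the translation crosses the next member of the right-going family. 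So the vanishing of $\int_{\mc C_n^\ell}\bigl(F_j(\dots,t_j+3,\dots)-F_j(t)\bigr)$ requires either that the specific $F_j$ be regular at (or have cancelling residues across) the crossed hyperplanes, or the full machinery showing that the hypergeometric pairing descends to the cohomology of the discrete local system, together with the decay of the $\Gamma$-factors as $\operatorname{Im}t_j\to\pm\infty$. Your ``pole-separation check'' acknowledges part of this, but the argument as written would let the proof go through for contours and integrands for which the conclusion is false; pinning down exactly why the crossed poles do not contribute is where the real work of \cite{MV1}/\cite{TV1} lies.
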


\begin{thm}
\label{thm main}
Let $n=2\ell,$ $ \la=(\ell,\ell)$ or $n=2\ell+1$, $\la=(\ell+1,\ell)$.
Let $I_\la(z_1,\dots,z_n)$ be the minimal skew-symmetric $V[\la]$-valued polynomial as above.
Then the \qi-hypergeometric integral $\Psi_\la(z_1,\dots,z_n)$ equals the polynomial $c_n I_\la(z_1,\dots,z_n)$, where
\begin{equation}
\label{c}
c_2 = 2 \pi \sqrt{-1} \frac{\Gamma(2/3)\Gamma(-1/3)}{\Gamma(1/3)} ,
\qquad
c_{2\ell}= 3^{-\ell(\ell-1)} c_2^\ell,
\qquad
c_{2\ell+1} = (-1)^\ell 3^{-\ell^2} c_2^\ell.
\end{equation}
\end{thm}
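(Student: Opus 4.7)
The plan is to exploit the one-dimensionality of the space of $q$-conformal blocks together with the qKZ equations.

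First, by Theorem \ref{thm int CB} the integral $\Psi_\la(z)$ lies in $CB(z)$ for generic $z\in\C^n$, and by Theorem \ref{thm:I_is_qCB} so does $I_\la(z)$. Since $CB(z)$ is one-dimensional for generic $z$ (Corollary \ref{cor:dim}), there exists a meromorphic scalar function $c(z)$ on $\C^n$ with $\Psi_\la(z)=c(z)\,I_\la(z)$. Both sides satisfy the qKZ equations \eqref{qkz} by Theorems \ref{thm analyt cont } and \ref{thm:I_is_qKZ}; substituting $\Psi_\la=c\,I_\la$ and invoking the qKZ equation for $I_\la$ gives $c(\ldots,z_i-3,\ldots)=c(z)$ for every $i$, so $c$ is $3$-periodic in each variable.

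The central step is to upgrade $3$-periodicity to constancy of $c$. First I would establish that $c$ is holomorphic. The components of $I_\la$ have no common polynomial factor, as one can see in the $h=0$ limit exactly as in the proof of Theorem \ref{thm:I_is_qKZ}; hence a pole of $c$ would have to appear in every component of $\Psi_\la$. The explicit pole divisor of the integrand $\Phi\,W\,w$ listed just before the definition of $\mc C_n^\ell$ can then be used to rule this out. Next I would bound the growth of $c$ as one of the $z_j$ tends to infinity along the imaginary axis. Stirling-type asymptotics applied to the integrand, together with the polynomial growth of $I_\la$, show that $c$ grows at most polynomially; combined with $3$-periodicity in every variable this forces $c$ to equal a constant $c_n$.

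It remains to compute $c_n$, which I would do by induction on $n$. The base case is $n=2$, $\la=(1,1)$: the integral $\Psi_{(1,1)}$ reduces after the substitution $s=t_1/3$ to the Barnes integral \eqref{barnes} for a specific choice of parameters, yielding $c_2=2\pi\sqrt{-1}\,\Gamma(2/3)\Gamma(-1/3)/\Gamma(1/3)$. For the inductive step I would specialize $z_n=z_{n-1}+1$. On the integral side, a pair of poles of $\Phi\,W\,w$ in one of the variables $t_j$ pinches the contour $\mc C_n^\ell$, and the explicit residue expresses $\Psi_\la|_{z_n=z_{n-1}+1}$ as a product of an elementary $\Gamma$-factor with $\Psi_{\la'}$ for the reduced $(n-2)$-point configuration. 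On the polynomial side, the normalization and skew-symmetry of $I_\la$ force $I_\la|_{z_n=z_{n-1}+1}$ to equal an explicit elementary factor times $I_{\la'}$. Matching the two reductions produces the recursion giving $c_{2\ell}=3^{-\ell(\ell-1)}c_2^\ell$ and $c_{2\ell+1}=(-1)^\ell 3^{-\ell^2}c_2^\ell$.

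The main obstacle is the growth estimate required to pass from $3$-periodicity to constancy of $c$, since the integrand carries essential exponential factors from its $\Gamma$-functions that must be controlled carefully in order to apply a Liouville-type argument. A secondary difficulty is the bookkeeping through the induction, as a single missed factor of $3$ or misplaced sign would give an incorrect $c_n$; the precise form of the formulas for $c_{2\ell}$ and $c_{2\ell+1}$ suggests the recursion must be set up so that each reduction step contributes a controlled power of $3$ and a controlled sign depending on the parity of the configuration.
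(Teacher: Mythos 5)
Your first half coincides with the paper's proof: by Corollary \ref{cor:dim} and Theorems \ref{thm analyt cont }, \ref{thm int CB}, \ref{thm:I_is_qCB}, \ref{thm:I_is_qKZ} one gets $\Psi_\la=c(z)\,I_\la$ with $c$ a scalar, $3$-periodic in each variable, and your $n=2$ evaluation via the Barnes integral \eqref{barnes} is exactly the paper's computation of $c_2$. Where you diverge is in how $c$ is pinned down. The paper does not pass through a Liouville argument nor an induction on $n$: it works in the asymptotic zone $|z_{2i}-z_{2i-1}|\le 1$, $|\mathrm{Im}\,z_{2i}|\le 1$, $\mathrm{Re}(z_{2i+2}-z_{2i})\gg 1$, applies Stirling's formula to the integrand, and, following the proof of Theorem 6.7 of \cite{TV1}, finds $c_n=\mathrm{const}\,(1+o(1))$ there; $3$-periodicity (shifting variables by multiples of $3$ deeper into the zone) then forces the exact equality, and the constants \eqref{c} fall out of that single asymptotic computation.

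Your alternative route has two load-bearing steps that are not established and, as described, one of them rests on a faulty inference. (a) Holomorphy of $c$ cannot be read off from the list of poles of the integrand $\Phi\,W\,w$: poles of $\Psi_\la$ in the $z$-variables arise when poles of the integrand pinch the cycle $\mc C_n^\ell$ under analytic continuation, and Theorem \ref{thm analyt cont } only guarantees that $\Psi_\la$ is meromorphic; ruling out poles of $c$ therefore requires a pinching analysis you have not done. The polynomial growth bound needed for your Liouville step is acknowledged but not supplied, and even granting constancy you still have not obtained the value of $c_n$. (b) The proposed recursion at $z_n=z_{n-1}+1$ is entirely unexecuted: at that specialization the prefactor $\Gamma((z_n-z_{n-1}+1)/3)/\Gamma((z_n-z_{n-1}-1)/3)$ of $\Phi$ vanishes, so one must exhibit a compensating pinch of the $t$-contour and compute its residue; one must also prove the matching factorization of $I_\la$ at $z_n=z_{n-1}+1$ into an explicit factor times $I_{\la'}$, and handle the odd-$n$ chain down to its own base case. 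Since, as you note, the powers of $3$ and the signs in \eqref{c} are precisely what such a computation must produce, the actual content of the theorem beyond proportionality is still missing; you would need either to carry out this recursion in full or to adopt the paper's asymptotic-zone/Stirling argument, which settles constancy and the value of $c_n$ in one stroke.
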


By Section \ref{sec:geom} the polynomial $I_\la$ is the generating function of the extended
Joseph polynomials of the orbital varieties
associated with nilpotent $n\times n$-matrices. Hence, Theorem \ref{thm main} gives
an integral representation for those extended Joseph polynomials.

\begin{proof}
For $n=2$ the middle weight is $(1,1)$ and we have
$I_{(1,1)}=v_{12}-v_{21}$, $\Psi_{(1,1)}=c_2I_{(1,1)}$, where
\begin{align*}
c_2&= \pi^{-1}\sin(\pi(z_{2j}-z_{2j-1}+1)/3) \frac {\Gamma((z_2-z_1+1)/3)}{\Gamma((z_2-z_1-1)/3)}\times
\\
&\qquad\times
\int_{\mc C_2}
\prod_{m=0}^1\Gamma((z_{2j-m}-t_j-1)/3)\Gamma(1-(z_{2j-m}-t_j)/3) \frac {dt}{t-z_1}\\
&= 2 \pi \sqrt{-1} \frac{\Gamma(2/3)\Gamma(-1/3)}{\Gamma(1/3)} ,
\end{align*}
see \eqref{barnes}.

For arbitrary $n$ we have $\Psi_\la(z_1,\dots,z_n)=c_n(z_1,\dots,z_n) I_\la(z_1,\dots,z_n)$, where
$c_n(z_1,\dots,z_n)$ is a scalar function $3$-periodic with respect to every variable. Indeed, both
$\Psi_\la(z_1,\dots,z_n)$ and $I_\la(z_1,\dots,z_n)$ are quantized conformal blocks at level 1 and both
satisfy the \qKZ/ equations. To check that $c_n$ is given by \eqref{c}
we consider
the asymptotic zone:
\begin{enumerate}
\item[(i)]
$|z_{2i}-z_{2i-1}|\leq 1$, $|\text{Im}(z_{2i})|\leq 1$
for $i=1,\dots,\ell$,
\item[(ii)] Re$(z_{2i+2}-z_{2i})\gg 1$ for $i=1,\dots,\ell-1$ and
Re$(z_{n}-z_{n-2})\gg 1$ if $n$ is odd,
\end{enumerate}
use the Stirling formula for the Gamma functions,
\be
\frac{\Gamma((x+\al)/p)}{\Gamma((x+\beta)/p)} = (x/p)^{\al-\beta}(1+o(1)),
\qquad |\text{arg}(x/p)|<\pi,
\ee
and similarly to the proof of Theorem 6.7 in \cite{TV1} observe that
\[
c_{2\ell}= 3^{-\ell(\ell-1)} c_2^\ell (1+o(1)),
\qquad
c_{2\ell+1} = (-1)^\ell 3^{-\ell^2} c_2^\ell (1+o(1)).
\]
This proves the theorem.
\end{proof}

\section{An alternative integral for $N=2$}
\label{sec:altint}

Finally, we formulate an integral representation in
the two-row case $\lambda=(n-p,p)$,
distinct from that of the previous paragraph,
and which generalizes that of
\cite{artic42} (which was the case $n$ odd, $n=2p+1$).

Expanding $I_\la=\sum_L I_L v_L$, the multi-indices that contribute
to the sum have $n-p$ $1$s and $p$ $2$s. Let us parameterize
them as follows: denote $L(a)$ the multi-index
whose $2$s are located at indices $a=(a_1<\cdots<a_p)$.
\begin{thm}
\[
I_{L(a)}=(-1)^{p(n-p+1)}h^p\! \prod_{1\le i<j\le n }\! (h+z_i-z_j)
\oint \prod_{k=1}^p \frac{dw_k}{2\pi \sqrt{-1}}
\frac{\prod_{1\le k<\ell\le p} (w_\ell-w_k)(h+w_k-w_\ell)}%
{\prod_{k=1}^p\left(\prod_{i=1}^{a_k} (w_k-z_i)
\prod_{i=a_k}^{n} (h+w_k-z_i)\right)}
\]
The integration cycle is the product of $p$ identical 1-dimensional cycles. 
The 1-dimensional cycle is
any contour that surrounds once counterclockwise each of the $z_1,\ldots,z_n$ but none of the $z_1-h,\ldots,z_n-h$.
\end{thm}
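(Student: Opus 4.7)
The strategy is to invoke Lemma \ref{lem:deg}, which characterizes $I_\la$ uniquely as the $V[\la]$-valued skew-symmetric polynomial of minimal degree $k(\la) = \binom{n-p}{2} + \binom{p}{2}$ whose $v_{L_0}$-coefficient is $D_0$. Writing the right-hand side of the theorem as $\tilde I_{L(a)}$ and assembling $\tilde I_\la = \sum_a \tilde I_{L(a)}\>v_{L(a)}$, my plan is to verify these three defining properties for $\tilde I_\la$ and conclude by uniqueness that $\tilde I_\la = I_\la$.

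Polynomiality of $\tilde I_{L(a)}$ follows from two observations. As $(z,h)$ vary, the integrand has poles in $w_k$ only at the points $z_1,\ldots,z_n$ (inside the prescribed cycle) and $z_1-h,\ldots,z_n-h$ (outside); the integral is therefore holomorphic in $(z,h)$ except where an inside pole meets an outside pole, i.e.\ along the hyperplanes $z_i = z_j - h$. These are precisely the zeros of the prefactor $\prod_{i<j}(h+z_i-z_j)$, so the product is holomorphic on all of $\C^n\times\C$. A degree count --- the prefactor contributes $p + \binom{n}{2}$, the integrand is homogeneous of degree $p(p-1) - p(n+1)$ in $(w,z,h)$, and taking residues in $p$ variables raises the degree by $p$ --- gives $\deg \tilde I_{L(a)} = k(\la)$, so $\tilde I_{L(a)}$ is a polynomial of the correct degree. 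For the normalization, the case $a = (n-p+1,\ldots,n)$ (corresponding to $L_0$) is a direct iterated-residue computation: peeling off $w_k = z_{n-p+k}$ in decreasing order of $k$ and using Vandermonde-like telescoping (as already visible in the small-$n$ cases) reduces the integral to the reciprocal of the prefactor times $(-1)^{p(n-p+1)} D_0$.

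The heart of the proof is to verify skew-symmetry of $\tilde I_\la$ under the deformed $S_n$-action \eqref{Sn}, equivalently the coefficient identities $\tilde I_{s_i L(a)} = -\hat s_i \tilde I_{L(a)}$ supplied by Lemma \ref{fL}. When $L(a)$ is $s_i$-invariant (positions $i$ and $i+1$ carry the same letter), this becomes a scalar identity; I would prove it by tracking how the prefactor and integrand transform under $z_i \leftrightarrow z_{i+1}$, combined when necessary with a simultaneous renaming $w_k \leftrightarrow w_{k+1}$ of the two integration variables tied to those positions. When $L(a)$ is not $s_i$-invariant, $s_i$ mixes two components $\tilde I_{L(a)}$ and $\tilde I_{L(a')}$, with $a'$ obtained from $a$ by exchanging the role of positions $i$ and $i+1$ in the marked set. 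To establish this cross-relation I would deform one of the $w_k$-contours across the pole at $w_k = z_i - h$ (or $w_k = z_{i+1} - h$) and pick up a residue whose evaluation, combined with the remaining $w$-integrations, reconstructs the integrand for $\tilde I_{L(a')}$ with exactly the coefficient demanded by $\hat s_i$.

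The main obstacle is precisely this mixed-component skew-symmetry. One must verify that the residue at the shifted pole $w_k = z_i - h$, after evaluation of the factors $(h + w_\ell - w_k)$ and $(w_k - z_j)$ at that point, conspires with the other $w$-integrations to produce the neighboring component $\tilde I_{L(a')}$ with the correct $h$-rational coefficient. This is the mechanism already used for $n = 2p+1$ in \cite{artic42}; the content of the present generalization is to organize the pole-chasing for arbitrary $(n,p)$ in a way that matches the reshuffling of marked positions from $a$ to $a'$. Once skew-symmetry is in place, Lemma \ref{lem:deg} completes the identification $\tilde I_\la = I_\la$, finishing the proof.
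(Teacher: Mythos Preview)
Your overall strategy is the paper's: verify polynomiality, degree $k(\la)$, skew-symmetry, and the $L_0$-coefficient, then invoke Lemma~\ref{lem:deg}. The polynomiality and degree arguments are fine.

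Where you go astray is the mixed-component skew-symmetry. Your proposed mechanism --- deform a $w_k$-contour across $w_k=z_i-h$ and pick up a residue --- is not needed and is not how the identity arises. The relation $\hat s_i\tilde I_{L(a)}=-\tilde I_{s_iL(a)}$ holds pointwise at the level of the integrand, over the \emph{same} contour (which encircles all the $z_j$ and none of the $z_j-h$, independently of $a$). Take $a_k=i$, $i+1\notin a$. The only factors of the full integrand (prefactor included) that are not symmetric in $z_i,z_{i+1}$ are $(h+z_i-z_{i+1})$ and $1/(w_k-z_i)$, so the integrand is $\dfrac{h+z_i-z_{i+1}}{w_k-z_i}\,g$ with $g$ symmetric in $z_i,z_{i+1}$. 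A two-line computation gives
\[
\hat s_i\Bigl(\frac{h+z_i-z_{i+1}}{w_k-z_i}\,g\Bigr)\,=\,
-\,\frac{(h+z_i-z_{i+1})(h+w_k-z_i)}{(w_k-z_i)(w_k-z_{i+1})}\,g\,,
\]
and the right-hand side is exactly minus the integrand for $a'$ with $a'_k=i+1$. Integrating both sides over the common contour yields the claim with no residue bookkeeping. The case $L_i=1$, $L_{i+1}=2$ is handled the same way.

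For the normalization you propose residues at the \emph{inside} poles $w_k=z_{n-p+k}$, but for $L_0$ the variable $w_k$ has $n-p+k$ poles inside the contour, so there is no single term to peel off. The paper instead flips the contour to surround the $z_j-h$ clockwise and uses the outside poles: $w_p$ has the single outside pole $z_n-h$; after that residue the numerator factor $w_p-w_{p-1}$ kills the pole of $w_{p-1}$ at $z_n-h$, leaving only $z_{n-1}-h$; and so on down to $w_1$. This cascade evaluates directly to $D_0$.
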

Note that the integrals have no pole at infinity
(the integrand behaves as $w_k^{2(p-1)-(n+1)}$ as $w_k\to\infty$)
so we may as well consider that the contour
surrounds clockwise the $z_1-h,\ldots,z_n-h$ but none of the $z_1,\ldots,z_n$.
\begin{proof}
We are going to apply Lemma~\ref{lem:deg}. Denote by $\hat I_{L(a)}$
the r.h.s.\ of the formula above.

First one needs to check that $\hat I_{L(a)}$ is a polynomial
in $z_1,\ldots,z_n,h$. This is a routine calculation
based on the application of the residue formula for the $w_k$ integrals
and the check that would-be poles in the variables $z_i$ have vanishing
residue (see a similar calculation in \cite{artic45});
since the formula is homogeneous in $z_1,\ldots,z_n,h$ this leaves
only a power of $h$ in the denominator which is cancelled by the factor $h^p$.
The degree of $\hat I_{L(a)}$ is then (as a homogeneous polynomial
in $z_1,\ldots,z_n,h$)
$p+n(n-1)/2+p+2p(p-1)/2-p(n+1)=p(p-1)/2+(n-p)(n-p-1)/2=k(\la)$.

Next, we check that $\hat I_\la=\sum_L \hat I_L v_L$
is skew-symmetric by use of Lemma~\ref{fL}.
Fixing $i=1,\ldots,n-1$, there are four possibilities:
\begin{itemize}
\item If $L_i=L_{i+1}=1$, the integrand (including the prefactor in front of the integral) is $h+z_i-z_{i+1}$ times a symmetric function of $z_i, z_{i+1}$.
This implies that $\hat s_i I_L=-I_L$.
\item If $L_i=L_{i+1}=2$, say $a_k=i$, $a_{k+1}=i+1$,
then the integrand minus itself
with $z_i\leftrightarrow z_{i+1}$ is skew-symmetric in
$w_k, w_{k+1}$ and therefore its integral is zero.
This implies again that $\hat s_i I_L=-I_L$.
\item If $L_i=2$, $L_{i+1}=1$, say $a_k=i$, $a_{k+1}>i+1$,
then the integrand is $\frac{h+z_i-z_{i+1}}{w_k-z_i}$ times
a symmetric function of $z_i,z_{i+1}$. Applying $\hat s_i$
results in $-\frac{(h+w_k-z_i)(h+z_i-z_{i+1})}{(w_k-z_i)(w_k-z_{i+1})}$
times the same function, which is nothing but minus the integrand
with $a_k\to i+1$, which is precisely
$L\to s_i(L)$. That is, $\hat s_i I_L=-I_{s_i(L)}$.
\item The case $L_i=1$, $L_{i+1}=2$ is treated similarly.
\end{itemize}

Therefore all the hypotheses of Lemma~\ref{lem:deg} are satisfied,
and $\hat I_\la$ is proportional to $I_\la$.
In order to fix the normalization,
we consider the case
$a=(n-p+1,\ldots,n)$, i.e., $L(a)=L_0$. Then the integrals
can be computed one by one as follows.
The integral over $w_p$ has only one pole outside the contour,
at $z_n-h$. Next, the integral over $w_{p-1}$ has two poles,
at $z_n-h$ and $z_{n-1}-h$, but the first one is cancelled
by the factor $w_p-w_{p-1}$ in the numerator (since we have
taken the residue at $w_p=z_n-h$). So there is only one contribution,
the residue at $w_{p-1}=z_{n-1}-h$; and so on. In the end,
evaluating the residues
at $w_k=z_{k+n-p}-h$ results in:
$
\hat I_{L_0}=
\prod_{1\le i<j\le n-p} (h+z_i-z_j)
\prod_{n-p+1\le i<j\le n} (h+z_i-z_j)
$,
which coincides with $I_{L_0}$, so that $\hat I_\la=I_\la$.
\end{proof}

\def\urlshorten http://#1/#2urlend{{#1}}%
\renewcommand\url[1]{%
\href{#1}{\scriptsize\expandafter\path\urlshorten#1 urlend}%
}
\gdef\MRshorten#1 #2MRend{#1}%
\gdef\MRfirsttwo#1#2{\if#1M%
MR\else MR#1#2\fi}
\def\MRfix#1{\MRshorten\MRfirsttwo#1 MRend}
\renewcommand\MR[1]{\relax\ifhmode\unskip\spacefactor3000 \space\fi
\MRhref{\MRfix{#1}}{{\scriptsize \MRfix{#1}}}}
\renewcommand{\MRhref}[2]{%
\href{http://www.ams.org/mathscinet-getitem?mr=#1}{#2}}

\bibliography{biblio}
\bibliographystyle{amsalphahyper}

\end{document}